\documentclass[11pt,a4paper]{article}
\usepackage[utf8]{inputenc}
\usepackage[T1]{fontenc}
\usepackage{amsmath,amssymb,amsthm,mathrsfs,bm}
\usepackage{geometry}
\usepackage{graphicx}
\usepackage{booktabs}
\usepackage{pgfplots}\pgfplotsset{compat=1.16}
\usepackage{hyperref}\hypersetup{colorlinks=true,linkcolor=blue,citecolor=blue,urlcolor=blue}

\newtheorem{proposition}{Proposition}

\newtheorem{remark}{Remark}

\newcommand{\R}{\mathcal{R}}
\newcommand{\Rc}{\mathcal{R}_{c}}
\newcommand{\N}{\mathcal{N}}
\newcommand{\Hs}{\mathcal{H}}
\newcommand{\su}{\mathfrak{su}(1,1)}
\newcommand{\sutwo}{\mathfrak{su}(2)}
\newcommand{\utwo}{\mathfrak{u}(2)}
\newcommand{\osp}{\mathfrak{osp}(1|2)}
\newcommand{\so}{\mathfrak{so}(2,1)}
\newcommand{\halg}{\mathfrak{h}_1}
\newcommand{\ralg}{\mathfrak{r}_1}
\newcommand{\jalg}{\mathfrak{j}_1}

\newcommand{\bialg}{\mathfrak{bi}_1}
\newcommand{\ket}[1]{|#1\rangle}
\newcommand{\bra}[1]{\langle#1|}
\newcommand{\ii}{\mathrm{i}}

\newcommand{\half}{\tfrac12}

\newcommand{\noleft}{\mathopen{}\mathclose\bgroup\left}
\newcommand{\noright}{\aftergroup\egroup\right}

\DeclareFontFamily{OMX}{yhex}{}
\DeclareFontShape{OMX}{yhex}{m}{n}{<->yhcmex10}{}
\DeclareSymbolFont{yhlargesymbols}{OMX}{yhex}{m}{n}
\DeclareMathAccent{\wideparen}{\mathord}{yhlargesymbols}{"F3}

\makeatletter
\let\wideparen@orig\wideparen
\renewcommand{\wideparen}[1]{%
  \mathchoice
    {\vphantom{\rule{0pt}{2.4ex}}\smash{\wideparen@orig{#1}}}% \displaystyle
    {\vphantom{\rule{0pt}{2.4ex}}\smash{\wideparen@orig{#1}}}% \textstyle
    {\wideparen@orig{#1}}% \scriptstyle
    {\wideparen@orig{#1}}% \scriptscriptstyle
}
\makeatother

\definecolor{MJRVpurple}{HTML}{7B00B5}

\title{\bf The para--Krawtchouk oscillator and its continuum limit}
\author{
St\'ephane Z. Beaulac\textsuperscript{1}\thanks{E-mail: stephane.jr.beaulac@umontreal.ca},\quad Nicolas Cramp\'e\textsuperscript{1,2}\thanks{E-mail: crampe1977@gmail.com},\quad Quentin Labriet\textsuperscript{1}\thanks{E-mail: quentin.labriet@umontreal.ca},\\[3pt]
Lucia Morey\textsuperscript{1}\thanks{E-mail: lucia.morey@umontreal.ca},\quad Marlon Josue Rivera Valladares\textsuperscript{1}\thanks{E-mail: marlon.josue.rivera.valladares@umontreal.ca},\quad Luc Vinet\textsuperscript{1}\thanks{E-mail: luc.vinet@umontreal.ca}\\[6pt]
\small\textsuperscript{1}Centre de recherches math\'ematiques, Universit\'e de Montr\'eal, P.O.\ Box 6128,\\
\small Centre-ville Station, Montr\'eal (Qu\'ebec), H3C 3J7, Canada\\[2pt]
\small\textsuperscript{2}Laboratoire d'Annecy de Physique Th\'eorique, 9 Chemin de Bellevue, BP 110,\\
\small Annecy-le-Vieux, F-74941 Annecy Cedex, France
}
\date{\today}

\begin{document}
\maketitle

\begin{abstract}
The para--Krawtchouk polynomials define, through their recurrence relation,
a finite oscillator model that carries out fractional revival. We
determine the continuum limit of this Hamiltonian, both from its recurrence relation and directly
on the polynomials. The limit is the singular (isotonic) oscillator
$-\partial_\eta^2+\eta^2+\tfrac{\gamma^2-1}{4\eta^2}$: the deformation parameter $\gamma$
survives as a central impurity of the couplings, decaying as the inverse square of the
distance --- the marginal rate, which the continuum scaling turns into a finite $1/\eta^2$
term. Using the decomposition of the para--Krawtchouk polynomials into two Hahn families, we
prove the confluence, for every degree, of the eigenvectors on the two sublattices to
generalized Hermite functions.
The bispectral pair of the discrete model, which generates the Hahn algebra, becomes in the limit a pair of generators of $\su$, the dynamical algebra of the singular oscillator.
\end{abstract}

\section{Introduction}
\label{sec:intro}

This paper is the first of three companion papers that determine the continuum limits of the
discrete Hamiltonians attached to three families of polynomials orthogonal on
bi--lattices --- the para--Krawtchouk polynomials \cite{VZ} here, the para--Racah
\cite{paraRacah} and para--Bannai--Ito \cite{PBI} ones in \cite{PRcont,PBIcont} --- whose
properties are reviewed in \cite{BCLMNV}. Each of these Hamiltonians is the Jacobi matrix $J$ of the family: the
symmetric tridiagonal matrix that represents the three--term recurrence relation of the
orthonormal polynomials $p_n$, $x\,p_n=\sqrt{U_{n+1}}\,p_{n+1}+B_n\,p_n+\sqrt{U_n}\,p_{n-1}$
--- the Hermitian version of the recurrence operator, with the recurrence coefficients $B_n$
on the diagonal and $\sqrt{U_n}$ next to it. Such a matrix can be viewed as the
one--excitation restriction of an $XX$ spin chain whose nearest--neighbour couplings and local
magnetic fields are engineered to be $\sqrt{U_n}$ and $B_n$; the choice of the polynomials
then determines the transport of an excitation along the chain. \emph{Perfect state transfer} (PST) --- the excitation reappearing
intact at the far end --- occurs for the Krawtchouk Hamiltonian, whose continuum limit is
the harmonic oscillator (the Krawtchouk polynomials tending to the Hermite polynomials).
\emph{Fractional revival} (FR) is the more general phenomenon in which the excitation reappears
as a coherent \emph{superposition} localized at the two ends, an entangled state \cite{GVZ};
PST is the special case of FR in which the whole weight lands at the far end.
 More generally, the one--parameter evolution $e^{-\ii\theta
J}$ is a \emph{harmonic} (periodic or quasi--periodic) motion, of which PST and FR are
particular instants. The para--Krawtchouk Hamiltonian \cite{VZ} is an analytically solvable model
realizing this whole family, tuned by a single parameter $\gamma\in(0,2)$.%
\footnote{The prefix records that the bi--lattice $\{2s\}\cup\{2s+\gamma\}$ of
\S\ref{sec:chain} is the spectrum of $a^\dagger a$ for the parabose oscillator of parameter
$\tfrac{\gamma-1}{2}$ \cite{Rosenblum}; it is unrelated to the para--orthogonal polynomials
on the unit circle of \cite{JNT}, the combinations $\Phi_n(z)+\tau\Phi_n^{*}(z)$,
$|\tau|=1$, used in Szeg\H{o} quadrature.}

This paper answers one question: \emph{what is the continuum limit of the single Hamiltonian
$J$ that defines this oscillator?} The answer is the \emph{singular} (isotonic, or Calogero)
oscillator. Three features make the result more than a routine contraction, and each
requires a separate analysis.

First, $\gamma$ cannot disappear in the limit: the discrete model depends on it (its whole spectrum
does), and one must recover the harmonic oscillator precisely at $\gamma=1$. The mechanism is
a \emph{central impurity}: the para--Krawtchouk Hamiltonian is the Krawtchouk one perturbed by a
defect concentrated at the middle and decaying as the inverse square of the distance --- a
\emph{marginal} perturbation that neither localizes nor washes out, and that becomes, in the
continuum coordinate $\eta$ of the oscillator, the $1/\eta^2$ centrifugal term whose
coefficient carries $\gamma$.

Second, the limiting object is \emph{not} the parabosonic (Wigner--Dunkl) oscillator
\cite{Wigner} that the family resemblance suggests. The Wigner--Dunkl oscillator has a reflection--\emph{graded}
potential $\mu(\mu-\R)/\eta^2$, $\R$ being the reflection $\eta\mapsto-\eta$, whose two parity
sectors carry centrifugal exponents differing by one unit. The para--Krawtchouk limit has instead a \emph{scalar} $1/\eta^2$ coefficient,
common to both sectors, and the reflection enters only through the choice of boundary
behaviour at the origin: the even and odd sublattices realize the two independent solutions
of one and the same singular oscillator, with exponents $(1\mp\gamma)/2$ separated not by $1$
but by $\gamma$. This separation is exactly the fractional--revival phase. The two
descriptions coincide only at $\gamma=1$.

Third, the singularity at the origin forces a choice of self--adjoint extension --- the same
delicate question that arises for the radial Schr\"odinger equation of a particle in a
magnetic monopole or a $1/r^2$ potential \cite{DHV}. The impurity of the discrete model selects that
extension, and it is what distinguishes the two channels.

The computation is carried out twice, in parallel with the classical Krawtchouk case.
Section~\ref{sec:chain} defines the polynomials, the Hamiltonian, its spectrum and its
eigenvectors. In \S\ref{sec:eqlimit} we contract the recurrence relation, read as a discrete
Schr\"odinger equation, and identify the central impurity of the couplings that survives as
the $1/\eta^2$ term. In \S\ref{sec:polylimit} we take the limit \emph{directly on the
polynomials}, using their explicit hypergeometric form and the reduction of the
para--Krawtchouk polynomials to two Hahn families \cite{BCLMNV}, and prove the confluence,
for every degree, to the generalized Hermite functions (Proposition~\ref{prop:conv}), the
envelope being read off the normalization of the polynomials. Section~\ref{sec:reflection}
treats the reflection and the two channels, \S\ref{sec:algebra} what the bispectral pair of
the discrete model, which generates the Hahn algebra, becomes in the limit, and
\S\ref{sec:even} the case of even $N$, in which the impurity is shared between the couplings
and the fields and which has the same limit. Sections \ref{sec:extension}--\ref{sec:doubling}
treat the self--adjoint extension at the singular point, the comparison with the Krawtchouk
and Hahn oscillators and with the $\su$ Meixner oscillator on the semi--infinite lattice,
which reaches the same singular oscillator with no contraction at all, and the contrast with
the \emph{doubling} construction of Oste and Van der Jeugt. Section~\ref{sec:FR} turns to fractional revival and
the harmonic evolution, at finite $N$ and in the limit. Appendix~\ref{app:contraction}
establishes the sign structure of the eigenvectors on which the contraction of
\S\ref{sec:eqlimit} rests, from the oscillation theorem of Gantmacher and Krein.

\section{The para--Krawtchouk polynomials}
\label{sec:chain}
Let $N=2j+p$, where $p=0,1$ according as $N$ is even or odd, respectively, and $j$ is a
positive integer. The monic para-Krawtchouk polynomials $P_n(x)$ are defined by the three-term recurrence relation
\begin{equation}
xP_n(x)=P_{n+1}(x)+B_nP_n(x)+U_nP_{n-1}(x),
\qquad n=0,1,\ldots,N,
\label{monicPKrecurrence}
\end{equation}
with $ P_{-1}(x)=0,$ $ P_0(x)=1$ and the coefficients given by
\begin{align}
B_n
&=-\frac{(2j+p-n)(2j+2p-2-2n+\gamma)}{2(2n-2j-2p+1)}-\frac{n(2j+2-2n-\gamma)}{2(2n-2j-1)}\,,\label{eq:recurrenceB}\\
U_n
&=\frac{n\left(2 j +p -n +1\right) \left(2 j +2 p -2 n +\gamma \right)  \left(2 j +2-2 n -\gamma \right)}{4 \left(2 n -1-2 j -2 p \right) \left(2 n -2 j -1\right)}\,.\label{eq:recurrenceU}
\end{align}
The orthonormal polynomials are
$p_n=P_n/\sqrt{h_n}$ with $h_n=U_1U_2\cdots U_n$ ($h_0=1$), and \eqref{monicPKrecurrence} becomes
\begin{equation}
\label{eq:recnorm}
xp_n(x)=\sqrt{U_{n+1}}\,p_{n+1}(x)+B_n\,p_n(x)+\sqrt{U_n}\,p_{n-1}(x).
\end{equation}
Let $\ket{n}$, $n=0,\dots,N$, be the canonical basis of $\mathbb C^{N+1}$ --- the
\emph{sites} --- and $J$ the Jacobi matrix of the family, the real symmetric tridiagonal
matrix with
\begin{align}
    &\langle n|J|n\rangle=B_n\,,\\
    &\langle {n}|J|{n+1}\rangle=\langle {n+1}|J|{n}\rangle=\sqrt{U_{n+1}}\,.
\end{align}
Its
eigenvalues are the zeros $x_0<x_1<\dots<x_N$ of $P_{N+1}$ (defined by \eqref{monicPKrecurrence} with $n=N$). They were determined in \cite{VZ} and form the bi-lattice
\begin{equation}
    \label{eq:bilattice}
\begin{aligned}
&x_{2s}=2s,\qquad
s=0,\ldots,j,\\
&x_{2s+1}=2s+\gamma,
\qquad
s=0,\ldots,j-1+p,
\end{aligned}
\end{equation}
that is $\{0,\gamma,2,2+\gamma,4,\dots\}$: two interleaved uniform lattices offset by
$\gamma$.
For the two sub-lattices to interlace without colliding, $\gamma$ must lie in the range $\gamma \in (0,2). $
At $\gamma=1$
the bi--lattice is uniform, $x_s=s$, and \eqref{monicPKrecurrence} reduces to the symmetric Krawtchouk
recurrence. By
\eqref{eq:recnorm} the normalized eigenvector with $x_s$ as eigenvalue is
\begin{equation}
\label{eq:eigvec}
\ket{x_s}=\sum_{n=0}^N\sqrt{w_s}\,p_n(x_s)\,\ket{n},\qquad J\ket{x_s}=x_s\ket{x_s},
\end{equation}
where the $w_s>0$ are the weights of the discrete orthogonality relation
\begin{equation}
    \sum_{s=0}^Nw_s\,p_n(x_s)p_{n'}(x_s)=\delta_{nn'}\,,
\end{equation}
normalized by $\sum_sw_s=1$; and for $N=2j+1$ they read \cite{VZ,BCLMNV}, according to the parity of $s$,
\begin{equation}
w_{2s}
=
\frac{2^{-N}(1-\gamma/2)_j}{(1/2)_j}
\,
\frac{(-j)_s(-\gamma/2-j)_s}
{s!(1-\gamma/2)_s},
\qquad
s=0,1,\ldots,j,
\label{weight_even}
\end{equation}
\begin{equation}
w_{2s+1}
=
\frac{2^{-N}(1+\gamma/2)_j}{(1/2)_j}
\,
\frac{(-j)_s(\gamma/2-j)_s}
{s!(1+\gamma/2)_s},
\qquad
s=0,1,\ldots,j.
\label{weight_odd}
\end{equation}
The weights for the even case can be derived through the Christoffel transform \cite{VZ}. Finally, the wavefunctions are defined by
\begin{equation}
\label{eq:def-un}
    u_s(n):=\langle n|x_s\rangle=\sqrt{w_s}\,p_n(x_s)\,.
\end{equation}

\begin{remark}
    The eigenvalues of $J$ are the first points of each of the two infinite
lattices, namely $\{2s\}_{s=0}^j$ and $\{2s+\gamma\}_{s=0}^{j-1+p}$. Since $N$ does not appear in these
formulas, increasing $N$ adds eigenvalues at the top and leaves the others where they are. This
is what allows the limit $N\to\infty$ to be taken eigenvalue by eigenvalue: the eigenvector
belonging to a given $x_s$ --- an energy of order one at the bottom of a spectrum of width
$O(N)$ --- can be followed as $N$ grows, and it converges, in the rescaled site variable, to an
eigenfunction of the limiting operator (\S\S\ref{sec:eqlimit}--\ref{sec:polylimit}), whose
spectrum is the union of the two infinite lattices.
\end{remark}

\paragraph{The site reversal.}
Let $\R_c$ be the site reversal, the unitary involution
$\R_c\ket{n}=\ket{{N-n}}$. The
matrix $J$ is persymmetric, $\R_c J\R_c=J$. Hence $\R_c$ commutes with $J$ and, the
spectrum being simple, each eigenvector is a mirror eigenvector,
$\R_c\ket{x_s}=\epsilon_s\ket{x_s}$ with $\epsilon_s=\pm1$. In fact
\begin{equation}
\label{eq:eps}
\epsilon_s=(-1)^{N+s}:
\end{equation}
reading $\R_c\ket{x_s}=\epsilon_s\ket{x_s}$ at the site $0$ gives $p_N(x_s)=\epsilon_s$,
and $p_N$ takes alternating signs at the $N+1$ points $x_s$, which interlace its $N$ zeros,
with $p_N(x_N)>0$ because all its zeros lie below $x_N$ and its leading coefficient is
positive. On the bi--lattice \eqref{eq:bilattice}, $\epsilon_s$ is therefore constant on
each sublattice, since the overall index of a point in $\{2s\}$, resp.\ $\{2s+\gamma\}$, is
always even, resp.\ odd: writing $N=2j+p$,
\begin{equation}
\label{eq:eps-sublattice}
\epsilon_{2s}=(-1)^{p},
\qquad
\epsilon_{2s+1}=(-1)^{p+1}.
\end{equation}
Thus for $N$ odd ($p=1$) the even sublattice carries $\epsilon=-1$ and the odd one
$\epsilon=+1$, while for $N$ even ($p=0$) the two signs are exchanged: $\epsilon=+1$ on
$\{2s\}$ and $\epsilon=-1$ on $\{2s+\gamma\}$.

\paragraph{The explicit form of the polynomials.}
For $n\le j$ the para--Krawtchouk polynomials are
\cite{sklyanin}
\begin{equation}
P_n(x)=\kappa_n\,
{}_3F_2\!\left(
\begin{matrix}
-n,\; n-2j-p,\; -x/2 \\[0.2cm]
-j,\; -j+1-p-\gamma/2
\end{matrix}
\,;1
\right),\qquad \kappa_n=\dfrac{2^n(-j)_n\left(-j+1-p-\gamma/2\right)_n}
{(n-2j-p)_n},
\label{eq:P3F2}
\end{equation}

where $\kappa_n$ makes $P_n$ monic; this expression is obtained in \cite{sklyanin} through a non-standard truncation of the continuous Hahn polynomials, and is
equivalent to the expression of \cite[\S4]{VZ} in terms of the complementary Bannai--Ito
polynomials. The restriction to $n\le j$ costs nothing, because the site reversal gives
the other half: reading $\R_c\ket{x_s}=\epsilon_s\ket{x_s}$ site by site,
\begin{equation}
\label{eq:mirror}
u_s(N-n)=\epsilon_s\,u_s(n),\qquad\text{i.e.,}\qquad p_{N-n}(x_s)=\epsilon_s\,p_n(x_s),
\end{equation}
so that the wavefunctions on the sites $n>j$ are those on the sites $N-n\le j$, up to the sign
$\epsilon_s=(-1)^{N+s}$.

\section{The recurrence as a discrete Schr\"odinger equation}
\label{sec:eqlimit}

We will first derive the continuum limit of the recursion relation for $N$ odd ($p=1$). The case of $N$ even is treated in \S\ref{sec:even}. The essential link to make is that the recurrence \eqref{eq:recnorm} of the orthonormal polynomials is in fact a discrete (time-independent) Schr\"odinger
equation for $\bm{p}\left(x\right)=\left[p_0\left(x\right),...,p_N\left(x\right)\right]^{\bm{\top}}$,
\begin{equation}
\label{eq:discreteSchr}
J\,\bm{p}\left(x_s\right)=x_s\,\bm{p}\left(x_s\right),
\end{equation}
with the Jacobi matrix $J$ acting as the Hamiltonian, the degree/site index $n$ as the discrete position, and $x_s$ as the energy. Instead of $N$ and $n$, it is convenient to work with $j$ directly, and with a shifted and re-scaled position from the centre:
\begin{equation}
    \label{eq:etanj}
    \eta:=\tfrac{n-j-1}{\sqrt{j+1}}\,,
\end{equation}
so that $n=j+1+\sqrt{j+1}\,\eta\,,$ and $ N-n=j-\eta\sqrt{j+1}.$
With this, the coefficients \eqref{eq:recurrenceB} and \eqref{eq:recurrenceU} become:
\begin{equation}
\label{eq:coefficientsMeta}
B_{n}=j+\frac{\gamma}{2},\quad
\sqrt{U_{n}}=\frac{(j+1)}{2}\,{\frac{\left(1-\frac{\eta^2}{(j+1)}\right)^{\frac{1}{2}}\left(1-\frac{\gamma^2}{4(j+1)\eta^2}\right)^{\frac{1}{2}}}{\left(1-\frac{1}{4(j+1)\eta^2}\right)^{\frac{1}{2}}}}.
\end{equation}
This is exact, but in the limit of large $j$ (equivalent to large $N$), we can use the binomial expansion to get the leading terms in $\sqrt{U_n}$:
\begin{equation}
\label{eq:UnMetaExpanded}
\sqrt{U_{n}}=\frac{(j+1)}{2}+\frac{1-\gamma^2}{16\eta^2}-\frac{\eta^2}{4}+O\noleft(\frac{1}{j}\noright)
\end{equation}

\paragraph{The staggered ansatz.}
Taking a fixed $s$ associated with the fixed eigenvalue $x_s$, then in the sequence of $\bm{p}\noleft(x_s\noright)$ (i.e., $p_0(x_s),\dots,p_N(x_s)$), there will be $N-s$ sign changes (shown in Appendix~\ref{app:contraction} via \cite[Ch. II.1]{GantmacherKreinBook}). Conversely, the sequence of \textit{envelopes} $\varphi\noleft(\eta\noright)$, which we define as
\begin{equation}
\label{eq:envelopePhi}
\varphi\noleft(\eta\noright):=\noleft(-1\noright)^{n}\,p_{n}\noleft(x_s\noright),
\end{equation}
is specifically constructed to change signs $s$ times, and make these changes more apparent. This can then be inverted to get a general form for the $p_n\noleft(x_s\noright)=(-1)^{n}\,\varphi\noleft(\eta\noright)$. The recurrence \eqref{eq:recnorm} thus becomes:
\begin{equation}
\label{eq:recPhi}
-\sqrt{U_{n+1}}\,\varphi\noleft(\eta+(j+1)^{-1/2}\noright)+B_n\,\varphi\noleft(\eta\noright)-\sqrt{U_n}\,\varphi\noleft(\eta-(j+1)^{-1/2}\noright)=x_s\varphi\noleft(\eta\noright),
\end{equation}
where we have used the mapping $n+m\mapsto\eta+m\,(j+1)^{-1/2}$.
Now, in the limit that $\noleft\lvert\eta\noright\rvert\gg (j+1)^{-1/2}$, we can Taylor expand $\varphi\noleft(\eta\pm (j+1)^{-1/2}\noright)$ around $\eta$ to get:
\begin{equation}
\label{eq:expansionPhi}
\varphi\noleft(\eta\pm (j+1)^{-1/2}\noright)=\varphi\noleft(\eta\noright)\pm\frac{1}{\sqrt{j+1}}\,\partial_{\eta}\varphi\noleft(\eta\noright)+\frac{1}{2(j+1)}\,\partial^2_{\eta}\varphi\noleft(\eta\noright)+O\noleft(\frac{1}{(j+1)^{3/2}}\noright).
\end{equation}
Combining this with \eqref{eq:coefficientsMeta}, \eqref{eq:UnMetaExpanded} and \eqref{eq:recPhi}, while assuming a low-lying eigenvalue ($s\ll j$) to match the orders on both sides of the recurrence relation, finally gives to leading order:
\begin{equation}
\label{eq:singosc}
\Big[-\partial_\eta^2+\eta^2+\frac{\gamma^2-1}{4\,\eta^2}\Big]\varphi
=\big(2x_s+2-\gamma\big)\varphi ,
\end{equation}
where the terms proportional to $j$ in $B_n$, $\sqrt{U_n}$ and $\sqrt{U_{n+1}}$, and the first derivative terms cancelled out exactly. In this form, we can directly see that taking $N\to\infty$ leads to a standard harmonic confinement, $-\partial_\eta^2+\eta^2$, as recovered in the limit continuum of the Krawtchouk Hamiltonian, perturbed by a central singular impurity, $\frac{\gamma^2-1}{4\,\eta^2}$, which vanishes when $\gamma=1$. Hence, we fully recover the Krawtchouk limit for the appropriate choice of $\gamma$.

\paragraph{The central impurity.}
From \eqref{eq:UnMetaExpanded}, it is clear that this singular term arises from the large $N$ limit of the $U_n$. However, this departure from the standard Krawtchouk case is not specific to the limit. It is also present in the discrete regime. Indeed, if we take the ratio of the para-Krawtchouk $U_n$ with the Krawtchouk couplings $U_n^{K}=\tfrac{n}{4}(2j+2-n)$, we find:
\begin{equation}
\label{eq:impurity}
\frac{U_n}{U_n^{K}}=\frac{(2j+2-2n)^2-\gamma^2}{(2j+1-2n)(2j+3-2n)}
=\frac{4\left(j+1\right)\eta^2-\gamma^2}{4\left(j+1\right)\eta^2-1}.
\end{equation}
Thus, $U_n=U_n^{K}\left(1-\tfrac{\gamma^2-1}{4(j+1)\eta^2}+O\noleft((j\eta^2)^{-2}\noright)\right)$, for which we can again characterize the model as being the Krawtchouk one plus a defect concentrated at the centre and decaying as the inverse square of the distance $\eta$. As was previously mentioned, we know that, for the Krawtchouk case (i.e., $\gamma=1$), we have a perfect state transfer.
For $\gamma\neq1$ however, the impurity converts perfect transfer
into fractional revival, and, being marginal, it is exactly this impurity that survives as the
$1/\eta^2$ term in the continuum potential.
\begin{remark}
$\gamma$ survives precisely because the impurity is \emph{marginal}. A defect decaying faster
than $1/\eta^2$ would be irrelevant (giving the ordinary oscillator for all $\gamma$); one
decaying slower would be relevant (destroying the oscillator). The inverse--square law is the
borderline that produces a finite $1/\eta^2$ potential.
\end{remark}
Figure \ref{fig:profiles} shows the discrete profile \eqref{eq:impurity} next to the continuum potential $V(\eta)=\eta^2+\tfrac{\gamma^2-1}{4\eta^2}$. We note that for $\gamma<1$ the couplings exceed
the Krawtchouk ones on every bond (except at $\eta=0$) by $\tfrac{1-\gamma^2}{4(j+1)\eta^2-1}$. This
excess, decaying as the inverse square of the distance, is the microscopic image of the continuum
attractive $1/\eta^2$ term. Conversely, for $\gamma>1$ the corresponding
deficit is the image of the repulsive term. In both the discrete and continuum regimes, these flatten to the clean case at $\gamma=1$.
For $\eta=0$, the discrete trend is reversed with $U_{j+1}/U_{j+1}^{K}=\gamma^2$: for $\gamma<1$, there is a deficit; and for $\gamma>1$, an excess. In the continuum limit, the bulk equation diverges at the centre. This is precisely where the discrete and continuum regimes decouple. As the potential itself is produced entirely by the tail of the impurity, it does not carry any information about the behaviour of the eigenfunctions at the centre. As such, the bulk equation, which is singular at that point, does not fix the behaviour of the eigenfunctions there. Because of this, we must supply the differential equation with a boundary condition at $\eta=0$.
\begin{figure}[ht!]
\centering
\begin{tikzpicture}
\begin{axis}[
  width=0.5\textwidth,height=6.2cm,
  xlabel={bond $n$},ylabel={$U_n/U_n^{K}$},
  title={\small(a) discrete couplings},
  xmin=1,xmax=41,ymin=0,ymax=2.5,
  legend style={at={(0.97,0.97)},anchor=north east,draw=none,font=\footnotesize},
  tick label style={font=\footnotesize},label style={font=\small},
]
  \addplot[blue,thick,mark=*,mark size=0.9pt,samples=41,domain=1:41]
    {((42-2*x)^2-0.25)/((41-2*x)*(43-2*x))};
  \addlegendentry{$\gamma=0.5$}
  \addplot[black,dashed,thick,samples=41,domain=1:41]{1};
  \addlegendentry{$\gamma=1$}
  \addplot[red,thick,mark=triangle*,mark size=1.1pt,samples=41,domain=1:41]
    {((42-2*x)^2-2.25)/((41-2*x)*(43-2*x))};
  \addlegendentry{$\gamma=1.5$}
\end{axis}
\end{tikzpicture}\hfill
\begin{tikzpicture}
\begin{axis}[
  width=0.5\textwidth,height=6.2cm,
  xlabel={$\eta$},ylabel={$V(\eta)=\eta^2+\tfrac{\gamma^2-1}{4\eta^2}$},
  title={\small(b) continuum potential},
  xmin=-2.6,xmax=2.6,ymin=-3.5,ymax=7,
  legend style={at={(0.5,0.97)},anchor=north,draw=none,font=\footnotesize},
  tick label style={font=\footnotesize},label style={font=\small},
]
  \addplot[blue,thick,samples=140,domain=0.18:2.6,restrict y to domain=-4:8]
    {x^2+(0.25-1)/(4*x^2)};
  \addlegendentry{$\gamma=0.5$}
  \addplot[blue,thick,forget plot,samples=140,domain=-2.6:-0.18,restrict y to domain=-4:8]
    {x^2+(0.25-1)/(4*x^2)};
  \addplot[black,dashed,thick,samples=80,domain=-2.6:2.6]{x^2};
  \addlegendentry{$\gamma=1$}
  \addplot[red,thick,samples=160,domain=0.14:2.6,restrict y to domain=-4:8]
    {x^2+(2.25-1)/(4*x^2)};
  \addlegendentry{$\gamma=1.5$}
  \addplot[red,thick,forget plot,samples=160,domain=-2.6:-0.14,restrict y to domain=-4:8]
    {x^2+(2.25-1)/(4*x^2)};
\end{axis}
\end{tikzpicture}
\caption{(a) The discrete coupling ratio $U_n/U_n^K$ on the $N=41$ bonds of a lattice of
$N+1=42$ sites: the para--Krawtchouk couplings equal the Krawtchouk ones except for a
central impurity, decaying as $1-\tfrac{\gamma^2-1}{4(j+1)\eta^2}$ with the distance $|\eta|\sqrt{j+1}$ of the
bond to the middle, absent at $\gamma=1$; the central bond deviates in the opposite
direction ($U_{j+1}/U_{j+1}^K=\gamma^2$).
(b) The continuum potential $V(\eta)=\eta^2+\tfrac{\gamma^2-1}{4\eta^2}$ obtained in the
limit: attractive for $\gamma<1$, purely harmonic at $\gamma=1$, repulsive for $\gamma>1$.
The tail of the impurity of (a) becomes the $1/\eta^2$ term of (b); the boundary behaviour
at $\eta=0$ is decided in the central layer, where the tail expansion fails.}
\label{fig:profiles}
\end{figure}

To do this, we can reconsider the derivation of \eqref{eq:singosc}. Although this expression is valid for the bulk values of $n$ (and therefore of $\eta$), it is important to note that it breaks down at $n=j+1$ ($\eta=0)$. Indeed, the assumption that $\noleft\lvert\eta\noright\rvert\gg (j+1)^{-1/2}$ is no longer valid for this value, and hence the derivation around this centre is not consistent. However, we can directly compute $\sqrt{U_n}$ and $\sqrt{U_{n+1}}$ at this point. To leading order, these are:
\begin{equation}
    \label{eq:coefficUM}
    \sqrt{U_{j+1}}=\frac{(j+1)}{2}\,\gamma\qquad\sqrt{U_{j+2}}=\frac{(j+1)}{2}\,\sqrt{\frac{4-\gamma^2}{3}}+O\noleft(\frac{1}{j}\noright).
\end{equation}
If we then consider the values $\eta=\pm\frac{1}{2\sqrt{j+1}}$, which lie directly next to the centre, then to order $j$, \eqref{eq:recPhi} gives:
\begin{equation}
\label{eq:A:centre}
\varphi\big(\pm\tfrac1{2\sqrt {j+1}}\big)=\frac\gamma2\,\varphi\big(\mp\tfrac1{2\sqrt {j+1}}\big)
+\frac12\sqrt{\frac{4-\gamma^2}3}\;\varphi\big(\pm\tfrac3{2\sqrt {j+1}}\big).
\end{equation}
This is the matching condition that fixes the behaviour of $\varphi$ as it approaches $\eta=0$ (from both sides). In other words, this picks the self--adjoint extension of \eqref{eq:singosc}
(\S\ref{sec:extension}), which the bulk equation leaves open. This caveat is specific to taking the limit of the discrete Schr\"odinger equation. This is not the case for taking the limit on the polynomials: that limit directly supplies the operator \emph{and} its $\eta=0$ boundary behaviour. We compute this limit in what follows.

\section{The continuum limit taken directly on the polynomials}
\label{sec:polylimit}

We now recover the eigenfunctions, in parallel with the classical statement that the
Krawtchouk polynomials tend to the Hermite polynomials. The key input is the reduction of the
para--Krawtchouk polynomials to two dual Hahn families, one per sublattice \cite{BCLMNV}.

\paragraph{The Krawtchouk template.}
For $\gamma=1$ the polynomials are the symmetric Krawtchouk polynomials
\begin{equation}
  K_n(x;\half,N)={}_2F_1\left(\begin{matrix}-n,-x\\-N\end{matrix};2\right)\,,
\end{equation}
orthogonal on $x=0,1,\dots,N$ with respect to the
binomial weight $w_x=2^{-N}\binom Nx$ \cite[\S9.11]{KLS}.
The Hermite polynomials follow from the Krawtchouk polynomials by setting $x\to \frac{N}{2}+\sqrt{\frac{N}{2}}\eta$ and then letting $N\to\infty$ \cite[\S9.11]{KLS}:
\begin{equation}
\lim_{N\to\infty} \sqrt{\binom{N}{n}}K_n\left(\frac{N}{2}+\eta\sqrt{\frac{N}{2}};\frac12,N\right)=\frac{(-1)^n}{\sqrt{2^nn!}}H_n(\eta)\,.
\end{equation}
The weight
$\sqrt{w_x}$ tends to the Gaussian $e^{-\eta^2/2}$, so that the wave function
$\sqrt{w_x}\,K_n$ tends to the Hermite function $e^{-\eta^2/2}H_n(\eta)$. The
Krawtchouk polynomials are self--dual, $K_n(x;\half,N)=K_x(n;\half,N)$, so the variable and the degree play interchangeable roles in the limit $N\to\infty.$

\paragraph{The two sublattices as dual Hahn polynomials.}
It was shown in \cite{BCLMNV} that for $N=2j+1$ and $n\le j$, the para-Krawtchouk polynomials restricted to the bi-lattice \eqref{eq:bilattice} can be expressed in terms of a classical orthogonal polynomial family:
\begin{equation}
\label{eq:Q}
P_n(x_{2t})=\kappa_n\,Q^e_n(t),\qquad
P_n(x_{2t+1})=\kappa_n\,\frac{(-j+\tfrac\gamma2)_n}{(-j-\tfrac\gamma2)_n}\,Q^o_n(t),
\end{equation}
\begin{equation}
\label{eq:Qdef}
Q^e_n(t)={}_3F_2\!\left(\begin{matrix}-t,\,-n,\,n-N\\
-j-\tfrac\gamma2,\,-j\end{matrix};1\right),\qquad
Q^o_n(t)={}_3F_2\!\left(\begin{matrix}-t,\,-n,\,n-N\\
-j+\tfrac\gamma2,\,-j\end{matrix};1\right),
\end{equation}
We can identify $Q_n^e(t)$ and $Q_n^o(t)$ as dual Hahn polynomials.  The dual Hahn polynomials \cite[\S9.6]{KLS} are
defined by
\begin{equation}
\label{eq:def-hahnpols}
  R_t(\lambda(n);\alpha,\beta,N)={}_3F_2\!\left(\begin{matrix} -t,-n,n+\alpha+\beta+1\\\alpha+1,-N\end{matrix};1\right) \,,\qquad \lambda(n)=n(n+\alpha+\beta+1)\,.
\end{equation}
With $(\alpha_e,\beta_e)=(-j-1-\tfrac\gamma2,\,-j-1+\tfrac\gamma2)$ and
$(\alpha_o,\beta_o)=(\beta_e,\alpha_e)$, we have $\lambda(n)=n(n-N)$ and
\begin{equation}
\label{eq:dualHahn}
Q^e_n(t)=R_t\big(\lambda(n);\,\alpha_e,\,\beta_e,\,j\big),\qquad
Q^o_n(t)=R_t\big(\lambda(n);\,\beta_e,\,\alpha_e,\,j\big).
\end{equation}
The two sublattices thus correspond to the same dual Hahn family with $\alpha$ and $\beta$
exchanged, equivalently with $\gamma$ replaced by $-\gamma$.
Observe that
the parameters lie outside the range in which the dual Hahn polynomials are orthogonal.

\paragraph{The confluence for general degree.}
We first fix the scaling of the degree $n$ that drives the limit.
Substituting $n=j+1+\sqrt{j+1}\,\eta\,,$ and $ N-n=j-\sqrt{j+1}\,\eta$ into $\lambda(n)=n(n-N)$ gives
\begin{align}
\lambda(n)&=-(j+1+\sqrt{j+1}\,\eta)\big(j-\sqrt{j+1}\,\eta\big)\\
&=(j+1)\eta^2+\sqrt{j+1}\,\eta-j(j+1)\label{eq:lambda} ,
\end{align}
so the two position parameters $n$ and $N-n$ build the argument $\eta^2$. A similar mechanism was applied for the contraction of the Hahn oscillator of \cite{JSV}.
The dual
Hahn polynomials $R_t(\lambda(n);\alpha,\beta,N)$ obey the three term recurrence relation \cite[\S9.6]{KLS}
\begin{align}
\label{eq:dHrec}
&\lambda\,R_t=A_t\,R_{t+1}-(A_t+C_t)\,R_t+C_t\,R_{t-1},
\end{align}
where the coefficients are given by
\begin{align}
    &A_t=(t+\alpha+1)(t-N),\\
&C_t=t(t-\beta-N-1).
\end{align}
For the two sets of parameters of the dual Hahn polynomials of \eqref{eq:dualHahn} and $N=2j+1$, the coefficients read
\begin{align}
\label{eq:AC}
&A^e_t=(t-j-\tfrac\gamma2)(t-j),\quad C^e_t=t(t-\tfrac\gamma2)\,,\\
&A^o_t=(t-j+\tfrac\gamma2)(t-j),\quad C^o_t=t(t+\tfrac\gamma2).
\end{align}
Consider the even sublattice and set $\widehat R_t=(j+1)^tQ^e_n(t)$, so that $\widehat R_0=1$.
Substituting $R_k=(j+1)^{-k}\widehat R_k$ into \eqref{eq:dHrec} one gets
\begin{equation}
\label{eq:Rhatrec}
A^e_t\,\widehat R_{t+1}=(j+1)\big(\lambda+A^e_t+C^e_t\big)\,\widehat R_t-(j+1)^2C^e_t\,\widehat R_{t-1}\,.
\end{equation}
Every term here is $O(j^2)$: dividing by $(j+1)^2$ and using \eqref{eq:lambda}--\eqref{eq:AC},
\begin{equation}
    \frac{A^e_t}{(j+1)^2}\to1,\qquad \frac{\lambda+A^e_t+C^e_t}{j+1}\to\eta^2-\big(2t+1-\tfrac\gamma2\big),
\qquad C^e_t=t\big(t-\tfrac\gamma2\big),
\end{equation}
as $j\to\infty$. By induction on $t$,
$\widehat R_t$ stays bounded and converges, and \eqref{eq:Rhatrec} passes to the limit
term by term, giving
\begin{equation}
\label{eq:Lagrec}
\eta^2\widehat{R}_t=\widehat R_{t+1}+\big(2t+1-\tfrac\gamma2\big)\widehat R_t
+t\big(t-\tfrac\gamma2\big)\widehat R_{t-1},\qquad \widehat R_0=1,
\end{equation}
which is exactly the recurrence relation of the monic Laguerre polynomials \cite[\S9.12]{KLS} with parameter $-\gamma/2$ in the
argument $y=\eta^2$.
Thus
\begin{equation}
    (j+1)^tQ_n^e(t)\to(-1)^tt!\,L_t^{(-\gamma/2)}(\eta^2)=(-\gamma/2+1)_t (-1)^t{}_1F_1\!\left(\begin{matrix} -t\\-\frac{\gamma}{2}+1\end{matrix};\eta^2\right)\,.
\end{equation}
 The odd sublattice runs through the same steps: since $A^o_t,C^o_t$ in
\eqref{eq:AC} are obtained from $A^e_t,C^e_t$ by $\gamma\to-\gamma$, the limit is the same
recurrence with $-\tfrac\gamma2$ replaced by $\tfrac\gamma2$. Hence
\begin{equation}
\label{eq:conflu}
(j+1)^t\,Q^e_n(t)\ \longrightarrow\ (-1)^t\,t!\,L_t^{(-\gamma/2)}(\eta^2),\qquad
(j+1)^t\,Q^o_n(t)\ \longrightarrow\ (-1)^t\,t!\,L_t^{(\gamma/2)}(\eta^2).
\end{equation}

\paragraph{The envelope from the normalization; the ground state.}
By \eqref{eq:def-un} and \eqref{eq:Q}, the amplitude of the eigenstate $x_{2t}$ at the
site $n\le j$ is
\begin{equation}
\label{eq:ampl0}
u_{2t}(n)=\kappa_n\sqrt{\frac{w_{2t}}{h_n}}\,Q^e_n(t)
,\qquad
u_{2t+1}(n)=\kappa_n\sqrt{\frac{w_{2t+1}}{h_n}}\frac{(-j+\tfrac\gamma2)_n}{(-j-\tfrac\gamma2)_n}\,Q^o_n(t),
\end{equation}
and since $Q^e_n(0)=Q^o_n(0)=1$
\begin{equation}
  u_0(n)=\sqrt{\frac{w_0}{h_n}}\,\kappa_n\,,\qquad u_1(n)=\sqrt{\frac{w_1}{h_n}}\frac{(-j+\tfrac\gamma2)_n}{(-j-\tfrac\gamma2)_n}\kappa_n \,.
\end{equation}
Dividing \eqref{eq:ampl0} by these two expressions removes $\kappa_n$ and $h_n$ altogether,
leaving the excited--state wavefunctions as simple multiples of the ground--state ones:
\begin{equation}
\label{eq:ampl}
u_{2t}(n)=\sqrt{\frac{w_{2t}}{w_0}}\;u_0(n)\,Q^e_n(t),\qquad
u_{2t+1}(n)=\sqrt{\frac{w_{2t+1}}{w_1}}\;u_1(n)\,Q^o_n(t).
\end{equation}
This is why the ground state deserves separate treatment: by \eqref{eq:ampl} every
excited-state amplitude is an exact multiple of $u_0(n)$, and once its limit
is known, only the weight ratio $w_{2t}/w_0$ (resp.\ $w_{2t+1}/w_1$) --- worked out
explicitly below --- is needed to combine it with the confluence \eqref{eq:conflu} and
obtain the excited-state limit. To find that limit we compare $u_0$ to the ground state of the ordinary Krawtchouk
Hamiltonian ($\gamma=1$), whose asymptotics are classical. At $\gamma=1$,
 \begin{align}
 &\kappa^K_n=(-1)^n2^{-n}N!/(N-n)!\,,\\
 &h^K_n
=4^{-n}n!\,N!/(N-n)!\,,\\
&w^K_0=2^{-N}\
\label{krawtchouk_weight},
 \end{align}
 so that the Krawtchouk wavefunctions are
the alternating binomial
\begin{equation}
\label{eq:krgs}
u^K_0(n)=(-1)^n\sqrt{2^{-N}\tbinom Nn}\,.
\end{equation}
 Comparing $u_0$ to $u_0^K$ isolates the $\gamma$--dependence, since the two share the
same recurrence except through the single factor $(-j-\tfrac\gamma2)_n$ in $\kappa_n$: using
$h_n/h^K_n=\prod_{l\le n}U_l/U^K_l$ from \eqref{eq:coefficientsMeta}, for $n\le j$,
\begin{equation}
\label{eq:envelope}
\frac{u_0(n)}{u_0^{K}(n)}=\sqrt{\frac{w_0}{w_0^K}}\;
\frac{(-j-\tfrac\gamma2)_n}{(-j-\tfrac12)_n}\,
\Bigg[\prod_{l=1}^{n}\frac{4(j+1-l)^2-1}{4(j+1-l)^2-\gamma^2}\Bigg]^{1/2},
\end{equation}
The asymptotics of \eqref{eq:envelope} are handled by the fact
\begin{equation}
\label{asymptotics-gamma}
\frac{\Gamma(z+a)}{\Gamma(z+b)}\sim z^{a-b},\qquad z\to\infty,
\end{equation}
which follows from Stirling's formula. The Pochhammer ratio can be written as
\begin{equation}
\label{eq:pochratio}
\frac{(-j-\tfrac\gamma2)_n}{(-j-\tfrac12)_n}
=\frac{\Gamma(j+1+\tfrac\gamma2)}{\Gamma(j+\tfrac32)}\;
\frac{\Gamma(j+1-n+\tfrac12)}{\Gamma(j+1-n+\tfrac\gamma2)}.
\end{equation}
Remembering $n=j+1+\eta\sqrt{j+1}$ and $n\le j$, we have $\eta<0$, and $j+1-n=   |\eta|\sqrt{j+1}>0$ for the distance of the site $n$ from the middle. The asymptotics \eqref{asymptotics-gamma} gives
\begin{align}
\frac{\Gamma(|\eta|\sqrt{j+1}+\tfrac12)}{\Gamma(|\eta|\sqrt{j+1}+\tfrac\gamma2)}&\sim|\eta|^{\frac{1-\gamma}2}{j}^{\frac{1-\gamma}{4}},\\
  \frac{\Gamma(j+1+\frac{\gamma}{2})}{\Gamma(j+\frac{3}{2})}&\sim j^{\frac{\gamma-1}{2}}.
\end{align}
The weight associated to the first spectral point is
\begin{equation}
    w_0=\frac{2^{-N}(1-\gamma/2)_j}{(1/2)_j},
\end{equation}
and by  \eqref{asymptotics-gamma} we have
\begin{equation}
  \sqrt{\frac{w_0}{w_0^K}}=\sqrt{\frac{(1-\gamma/2)_j}{(1/2)_j}}=\sqrt{\frac{\Gamma(j+1-\gamma/2)\Gamma(1/2)}{\Gamma(j+1/2)\Gamma(1-\gamma/2)}}\sim \Big(\frac{\sqrt\pi}{\Gamma(1-\gamma/2)}\Big)^{1/2} j^{\frac{1-\gamma}{4}}
\end{equation}
Each factor in the product of \eqref{eq:envelope} can be rewritten as
\begin{equation}
\frac{4(j+1-l)^2-1}{4(j+1-l)^2-\gamma^2}
=1+\frac{\gamma^2-1}{4(j+1-l)^2-\gamma^2},
\qquad l=1,\dots,n.
\end{equation}
Taking logarithms and using $\ln(1+x)\sim x$ for $x\sim 0$, the logarithm of the whole
product is bounded by a constant times $\sum_{l=1}^{n}(j+1-l)^{-2}=\sum_{m=j+1-n}^{j}m^{-2}\le
\tfrac1{j-n}$, and $j+1-n=|\eta|\sqrt{j+1}$, so for any fixed $\eta\neq0$ the whole product
tends to $1$ as $j\to\infty$. Combining these estimates, the r.h.s.\ of \eqref{eq:envelope} behaves as
\begin{equation}
\frac{u_0(n)}{u_0^{K}(n)}\sim \Big(\frac{\sqrt\pi}{\Gamma(1-\gamma/2)}\Big)^{1/2}|\eta|^{\frac{1-\gamma}{2}}\,.
\end{equation}
The classical piece is the Krawtchouk ground state itself: for $n=j+1+\sqrt {j+1}\,\eta$, the
Stirling asymptotic for the binomial weight gives
\begin{equation}
2^{-N}\binom Nn\sim\big(\pi(j+1)\big)^{-1/2}e^{-\eta^2},
\end{equation}
so that as $j\to\infty$ we have $
    (j+1)^{1/4}|u_0^K(n)|\sim \pi^{-1/4}e^{-\eta^2/2}\,.$ Putting the pieces together we conclude
\begin{equation}
\label{eq:gs}
(-1)^n\,(j+1)^{1/4}\,u_0(n) \sim\ \Gamma(1-\tfrac\gamma2)^{-1/2}\,|\eta|^{\frac{1-\gamma}{2}}\,e^{-\eta^2/2},
\end{equation}
the ground state of \eqref{eq:singosc}. On the right half, $u_0(N-n)=\epsilon_0u_0(n)=-u_0(n)$
and $(-1)^{N-n}=-(-1)^n$, so the envelope $(-1)^nu_0(n)$ is even in $\eta$. The ground state of the odd sublattice is treated the same way: the extra factor
$(-j+\tfrac\gamma2)_n/(-j-\tfrac\gamma2)_n$ of \eqref{eq:ampl0} converts to
$\Gamma(j+1-\tfrac\gamma2)\Gamma(j+1-n+\tfrac\gamma2)/\big(\Gamma(j+1+\tfrac\gamma2)\Gamma(j+1-n-\tfrac\gamma2)\big)
\sim|\eta|^{\gamma}j^{-\gamma/2}$, which raises the exponent by $\gamma$, while
$\sqrt{w_1/w_0}=\sqrt{(1+\gamma/2)_j/(1-\gamma/2)_j}\sim\big(\Gamma(1-\tfrac\gamma2)/\Gamma(1+\tfrac\gamma2)\big)^{1/2}j^{\gamma/2}$
supplies the compensating power of $j$; and we get
\begin{equation}
\label{eq:gsodd}
(-1)^n(j+1)^{1/4}\,u_1(n)\sim \Gamma(1+\tfrac\gamma2)^{-1/2}\,|\eta|^{\frac{1+\gamma}{2}}\,e^{-\eta^2/2}.
\end{equation}
On the right half $u_1(N-n)=\epsilon_1u_1(n)=+u_1(n)$, so the envelope $(-1)^nu_1(n)$ is odd
in $\eta$.
\begin{remark}
The two $\gamma$--powers are produced entirely by the normalization $\kappa_n$: the ratio of
the norms — the product in \eqref{eq:envelope} — contributes nothing in the limit. Note that
$\tfrac{1\mp\gamma}2$ are the two roots of $\sigma(\sigma-1)=\tfrac{\gamma^2-1}4$, i.e.\ the
two Frobenius exponents of \eqref{eq:singosc} at $\eta=0$: the bulk equation of
\S\ref{sec:eqlimit} fixes these two exponents but not which sublattice takes which — that is
exactly the information the boundary layer conceals and that the polynomials supply.
\end{remark}

\paragraph{The excited states.}
The only missing piece to compute the asymptotic
behaviour of $u_{2t}(n)$ (and, similarly, $u_{2t+1}(n)$) is the asymptotics of the
weight ratio $w_{2t}/w_0$ (resp.\ $w_{2t+1}/w_1$). Since $t$ is fixed, each Pochhammer $(-j)_t,(-\gamma/2-j)_t$ in \eqref{weight_even} is a
product of $t$ linear factors in $j$, so $(-j)_t(-\gamma/2-j)_t\sim j^{2t}$, giving
\begin{equation}
\label{eq:wratio}
\sqrt{\frac{w_{2t}}{w_0}}\sim\frac{j^{t}}{\sqrt{t!\,(1-\gamma/2)_t}}\,,\qquad
\sqrt{\frac{w_{2t+1}}{w_1}}\sim\frac{j^{t}}{\sqrt{t!\,(1+\gamma/2)_t}}\,.
\end{equation}
Substituting \eqref{eq:wratio} into \eqref{eq:ampl}, the power $j^t$ cancels  against
the $(j+1)^{-t}$ of the confluence \eqref{eq:conflu}.
\begin{proposition}[Confluence to the generalized Hermite functions]
\label{prop:conv}
Let $\gamma\in(0,2)$ and $t\in\{0,1,2,\dots\}$ be fixed, let $N=2j+1\to\infty$ through odd
integers, and write $n=j+1+\eta\sqrt{j+1}$. Then \eqref{eq:conflu} holds, and, if $\eta\neq0$,
the wavefunctions \eqref{eq:def-un} satisfy
\begin{equation}
\label{eq:genH}
(-1)^{n+t}\,(j+1)^{1/4}\,u_{2t}(n)\ \to\ \psi_{2t}(\eta),\qquad
(-1)^{n+t+1}\,(j+1)^{1/4}\,u_{2t+1}(n)\ \to\ \psi_{2t+1}(\eta),
\end{equation}
where
\begin{equation}
\label{eq:cNt}
\begin{aligned}
\psi_{2t}(\eta)&=\sqrt{\frac{t!}{(1-\tfrac\gamma2)_t\,\Gamma(1-\tfrac\gamma2)}}\;
|\eta|^{\frac{1-\gamma}2}e^{-\eta^2/2}L_t^{(-\gamma/2)}(\eta^2),\\
\psi_{2t+1}(\eta)&=\sqrt{\frac{t!}{(1+\tfrac\gamma2)_t\,\Gamma(1+\tfrac\gamma2)}}\;
\operatorname{sgn}(\eta)\,|\eta|^{\frac{1+\gamma}2}e^{-\eta^2/2}L_t^{(\gamma/2)}(\eta^2),
\end{aligned}
\end{equation}
and the functions $\psi_s$ are normalized in $L^2(\mathbb R,d\eta)$.
\end{proposition}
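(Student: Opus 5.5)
The proof assembles three ingredients already established: the confluence \eqref{eq:conflu}, the ground-state asymptotics \eqref{eq:gs}--\eqref{eq:gsodd}, and the weight-ratio asymptotics \eqref{eq:wratio}. These are combined through the exact factorization \eqref{eq:ampl}. I first treat the left half, $n\le j$, where $\eta<0$, and then extend to $\eta>0$ by the mirror relation \eqref{eq:mirror}.

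\textbf{Step 1: confluence.} I would first make \eqref{eq:conflu} rigorous. Write $\widehat R_t^{(j)}$ for the solution of \eqref{eq:Rhatrec} and $\widehat R_t^{\infty}$ for the solution of \eqref{eq:Lagrec}. I then induct on $t$: if $\widehat R_{t-1}^{(j)}$ and $\widehat R_t^{(j)}$ converge, then dividing \eqref{eq:Rhatrec} by $A_t^e$ shows $\widehat R_{t+1}^{(j)}$ converges. This uses only that $A_t^e/(j+1)^2\to1$, which is nonzero for fixed $t$, and the stated coefficient limits. Here $\eta$ is held fixed, so that $n=j+1+\eta\sqrt{j+1}$ runs along integers. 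For this I would take $\eta$ along the admissible sequence, or note that the coefficients are continuous in $\eta$, so convergence is locally uniform. Identifying the limit with $(-1)^t t!\,L_t^{(\mp\gamma/2)}(\eta^2)$ is the monic Laguerre recurrence from \cite[\S9.12]{KLS}.

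\textbf{Step 2: excited states on $\eta<0$.} By \eqref{eq:ampl},
\[
(j+1)^{1/4}u_{2t}(n)=\sqrt{\tfrac{w_{2t}}{w_0}}\,(j+1)^{-t}\,\big[(j+1)^{1/4}u_0(n)\big]\,\big[(j+1)^tQ_n^e(t)\big].
\]
By \eqref{eq:wratio}, the factor $\sqrt{w_{2t}/w_0}\,(j+1)^{-t}$ tends to $1/\sqrt{t!\,(1-\gamma/2)_t}$, since $j^t/(j+1)^t\to1$. Multiplying by $(-1)^n$ and using \eqref{eq:gs} and \eqref{eq:conflu}, I obtain
\[
(-1)^{n+t}(j+1)^{1/4}u_{2t}\to \frac{t!}{\sqrt{t!(1-\gamma/2)_t}}\,\Gamma(1-\tfrac\gamma2)^{-1/2}|\eta|^{\frac{1-\gamma}2}e^{-\eta^2/2}L_t^{(-\gamma/2)}(\eta^2),
\]
which is exactly $\psi_{2t}$. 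The odd sublattice is identical, using \eqref{eq:gsodd}. On $\eta<0$ one has $\operatorname{sgn}\eta=-1$, and this accounts for the extra $(-1)$ in the stated sign convention.

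\textbf{Step 3: right half.} For $\eta>0$ I would use \eqref{eq:mirror} with $\epsilon_{2t}=-1$ and $\epsilon_{2t+1}=+1$, together with $(-1)^{N-n}=-(-1)^n$. Under $\eta\mapsto-\eta$ the point $n$ goes to $N-n=j-\eta\sqrt{j+1}$, which is not exactly $j+1-\eta\sqrt{j+1}$. The shift is $O((j+1)^{-1/2})$ in $\eta$, so I would absorb it using the local uniformity in $\eta$ from Step 1 and the continuity of the limits away from $0$. This makes the even-sublattice envelope even and the odd one odd, matching the $\operatorname{sgn}(\eta)$ in $\psi_{2t+1}$.

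\textbf{Step 4: normalization.} Substituting $y=\eta^2$ gives
\[
\int_{\mathbb R}|\eta|^{1\mp\gamma}e^{-\eta^2}\big(L_t^{(\mp\gamma/2)}(\eta^2)\big)^2d\eta=\Gamma(t+1\mp\tfrac\gamma2)/t!
\]
by the Laguerre norm, and $\Gamma(t+1\mp\gamma/2)=(1\mp\gamma/2)_t\,\Gamma(1\mp\gamma/2)$, so the prefactors give norm $1$.

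\textbf{Main obstacle.} I expect the main obstacle to be the uniformity bookkeeping in Steps 1 and 3: the confluence and the ground-state asymptotics must hold along the integer-constrained sequence $n(j)$ for fixed $\eta$, and the mirror shift must be absorbed. I would handle this by proving all limits locally uniformly on compact subsets of $\eta\neq0$.
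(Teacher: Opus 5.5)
Your proposal is correct and follows the paper's own proof. Like the paper, you combine the exact factorization \eqref{eq:ampl} with the weight ratios \eqref{eq:wratio}, the ground-state limits \eqref{eq:gs}--\eqref{eq:gsodd} and the confluence \eqref{eq:conflu} on $\eta<0$, extend to $\eta>0$ by the mirror relation \eqref{eq:mirror}, and normalize with the Laguerre norm. Your two additions are details the paper passes over silently: you take integer sites $n$ with $\eta_j\to\eta$ and use locally uniform limits, and you note that the mirror site $N-n$ sits at $-\eta-(j+1)^{-1/2}$ rather than exactly at $-\eta$.
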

\begin{proof}
The proof follows from the previous discussion. Combining \eqref{eq:wratio}, \eqref{eq:ampl}, \eqref{eq:gs} and
\eqref{eq:gsodd} gives \eqref{eq:genH} with the functions \eqref{eq:cNt}, for $\eta<0$. Combining \eqref{eq:genH}, valid for $n\le j$, with the mirror relation \eqref{eq:mirror} extends the limit to
$n>j$, making the even--sublattice envelopes even and the odd--sublattice ones odd in $\eta$.
This gives the limit of every eigenvector. The normalization of the $\psi_s$ follows from
$\int_{\mathbb R}|\eta|^{1\mp\gamma}e^{-\eta^2}L_t^{(\mp\gamma/2)}(\eta^2)^2\,d\eta
=\int_0^\infty y^{\mp\gamma/2}e^{-y}L_t^{(\mp\gamma/2)}(y)^2\,dy=\Gamma(t+1\mp\tfrac\gamma2)/t!$.
\end{proof}

\begin{remark}
The factor $(j+1)^{1/4}$ is the discrete--to--continuum normalization. The eigenvectors
satisfy $\sum_n u_s(n)^2=1$ and the sites are spaced by $(j+1)^{-1/2}$ in $\eta$, so that
$\sum_n(j+1)^{-1/2}\big[(j+1)^{1/4}u_s(n)\big]^2=1$ goes over to $\int\psi_s^2\,d\eta=1$:
the limit functions must be normalized, which is what the constants in \eqref{eq:cNt}
express, and the amplitude at a fixed site vanishes as $(j+1)^{-1/4}$ --- one component of a
unit vector in a space of growing dimension.
\end{remark}

\begin{remark}
    The right--hand sides of \eqref{eq:genH} are the even and odd halves of two families of
\emph{generalized Hermite functions} \cite{Rosenblum} (\S\ref{sec:reflection}), and a direct
substitution shows that they are eigenfunctions of
\eqref{eq:singosc} with the eigenvalues $E_{2t}=2x_{2t}+2-\gamma=4t+2-\gamma$ and
$E_{2t+1}=2x_{2t+1}+2-\gamma=4t+2+\gamma$: both families are exact eigenfunctions of one and
the same singular oscillator, built on its two Frobenius solutions
$\eta^{(1\mp\gamma)/2}$ at the origin. At $\gamma=1$ (exponents $0,1$) these are the even and
odd ordinary Hermite functions.
\end{remark}

\section{The reflection and the two--channel structure}
\label{sec:reflection}

The two families \eqref{eq:genH} are the even and odd eigenspaces of the reflection
\begin{equation}
\label{eq:Hdef}
\R:\ \eta\mapsto-\eta,\qquad [\R,\Hs]=0,\quad
\Hs=-\partial_\eta^2+\eta^2+\frac{\gamma^2-1}{4\eta^2},
\end{equation}
whose discrete origin is the site reversal $\Rc$ of \S\ref{sec:chain}, with one sign
worth tracking. Precisely, the functions \eqref{eq:genH} form a complete orthogonal system
in $L^2(\mathbb R)$ --- on each half--line, by the substitution $y=\eta^2$, they are the
Laguerre polynomials $L_t^{(\mp\gamma/2)}$, complete in $L^2\big((0,\infty),y^{\mp\gamma/2}e^{-y}dy\big)$
since $\mp\gamma/2>-1$, dressed by the two parities --- and $\Hs$ denotes throughout the
self--adjoint operator on $L^2(\mathbb R)$ diagonal in this basis with the eigenvalues
$4t+2\mp\gamma$: the self--adjoint extension of the differential expression
\eqref{eq:singosc}, defined on smooth functions vanishing near the origin, that the two
sublattices select. We call the two eigenspaces of $\R$ in $L^2(\mathbb R)$ the two
\emph{channels} of $\Hs$, and the two sequences of eigenvalues its two \emph{towers}. The lattice has an even number $N+1=2j+2$ of sites, so the site
reversal $n\mapsto N-n$ is centred on the middle bond: in the centred variable of
\S\ref{sec:eqlimit} it reads  $\eta\mapsto-\eta$ on the
continuum variable, and it flips the staggering sign, $(-1)^{N-n}=-(-1)^n$. On the
envelopes $\varphi$ of \eqref{eq:envelopePhi} the site reversal therefore acts as minus
the parity,
\begin{equation}
\label{eq:RcR}
\Rc=-\R\quad\text{on the envelopes,}
\end{equation}
exactly as \eqref{eq:eps} requires: $\epsilon_s=(-1)^{s+1}$ says that the even--sublattice
envelopes ($\epsilon=-1$) are $\R$--even and the odd--sublattice ones ($\epsilon=+1$)
$\R$--odd, which is what \eqref{eq:genH} displays. Throughout, $\Rc$ is the site reversal
and $\R$ the reflection of the continuum variable. Crucially, the centrifugal
coefficient is the same scalar $\tfrac{\gamma^2-1}{4}$ in both sectors: the
reflection is not in the potential. What it grades is the boundary behaviour --- even
channel $\eta^{(1-\gamma)/2}$, odd channel $\eta^{(1+\gamma)/2}$ --- and hence the
self--adjoint extension of $\Hs$ (\S\ref{sec:extension}). The two exponents
$\tfrac{1\mp\gamma}{2}$ sum to $1$, being the two roots of
$\sigma(\sigma-1)=\tfrac{\gamma^2-1}{4}$, and are separated by $\gamma$; the two towers of
eigenvalues of $\Hs$, $4t+2-\gamma$ and $4t+2+\gamma$, are separated by $2\gamma$, i.e.,\ by
$\gamma$ in the spectrum $\{2t\}\cup\{2t+\gamma\}$ of $\N=\tfrac12(\Hs-2+\gamma)$, the
continuum image of $J$ (\S\ref{sec:FR}).

\paragraph{Why this is not the Wigner--Dunkl oscillator.}
It is tempting to write the $1/\eta^2$ term as the Dunkl potential $\mu(\mu-\R)/\eta^2$ and
call $\Hs$ parabosonic. This is wrong for $\gamma\neq1$. That potential has
\emph{parity--dependent} coefficients $\mu(\mu\mp1)$, whose Frobenius exponents differ by
exactly one. Matching the even exponent $\tfrac{1-\gamma}{2}$ fixes $\mu=\tfrac{1-\gamma}{2}$
but then predicts an odd exponent $\mu+1=\tfrac{3-\gamma}{2}$, whereas the model has
$\tfrac{1+\gamma}{2}$; they agree only at $\gamma=1$. Equivalently, no potential
$(\mu_0+\mu_1\R)/\eta^2$ reproduces two channels of the \emph{same} coefficient unless
$\mu_1=0$. Thus
$\Hs$ is the \emph{scalar} singular oscillator, with $\R$ grading the boundary condition
selected by the discrete model and absent from the potential; it is not a Wigner--Dunkl
oscillator away from $\gamma=1$.

\paragraph{Two complementary generalized--Hermite families.}
The generalized Hermite functions of parameter $\mu$ \cite{Rosenblum} are the eigenfunctions
of the Wigner--Dunkl oscillator $-\partial_\eta^2+\eta^2+\mu(\mu-\R)/\eta^2$: the even ones,
$|\eta|^{\mu}e^{-\eta^2/2}L_t^{(\mu-1/2)}(\eta^2)$, and the odd ones,
$\operatorname{sgn}(\eta)|\eta|^{\mu+1}e^{-\eta^2/2}L_t^{(\mu+1/2)}(\eta^2)$. The even
channel of \eqref{eq:genH} is the even family of parameter $\mu=\tfrac{1-\gamma}{2}$
(Laguerre index $\mu-\tfrac12=-\tfrac\gamma2$); the odd channel is the odd family of the
opposite parameter $-\mu=\tfrac{\gamma-1}2$ (exponent $-\mu+1=\tfrac{1+\gamma}2$, Laguerre
index $-\mu+\tfrac12=\tfrac\gamma2$). The para--Krawtchouk
continuum is thus the singular oscillator whose two $\R$--sectors are the even part of the
parabosonic family $+\mu$ and the odd part of the family $-\mu$ --- a $\mu\leftrightarrow-\mu$
combination that degenerates to the single ordinary oscillator at $\gamma=1$.

\section{Algebraic structure: from the Hahn algebra to $\su$}
\label{sec:algebra}

In the finite model \cite{VZ,BCLMNV}, the multiplication operator $X=J$ and the bispectral
difference operator $Y$ --- diagonal in the site basis, $Y\ket{n}=\lambda(n)\ket{n}$ with
$\lambda(n)=n(n-N)$, up to a normalization --- generate the
quadratic \emph{Hahn algebra} $\halg$ (gothic letters denote the algebras named after the polynomial
families, the subscript their rank), whose structure constants depend on $N$ and $\gamma$
\cite{VZ}, and the $(N+1)$--dimensional module decomposes, along the two
sublattices, into two Hahn--algebra submodules --- the $\Rc=\mp1$ sectors, which become
the two channels of the limit \cite{BCLMNV}. In the continuum
limit the two generators become, by \eqref{eq:singosc} and \eqref{eq:lambda},
\begin{equation}
\label{eq:XYlimit}
X=J\ \longrightarrow\ \tfrac12(\Hs+\gamma-2),\qquad
\frac{Y+j(j+1)}{j+1}\ \longrightarrow\ \eta^2 ,
\end{equation}
i.e.,\ (a shift of) the singular--oscillator Hamiltonian and (a shift and rescaling of)
the multiplication by $\eta^2$.
Together with the dilation $D=1+2\eta\partial_\eta$ these close on the dynamical algebra of the
singular oscillator,
\begin{equation}
\label{eq:su11}
[\Hs,\eta^2]=-2D,\qquad [\Hs,D]=4\Hs-8\eta^2,\qquad [\eta^2,D]=-4\eta^2,
\end{equation}
which is $\su\cong\mathfrak{sl}(2,\mathbb R)\cong\so$, with compact generator
$K_0=\tfrac14\Hs$. Thus the two generators of the quadratic Hahn algebra become, in the
limit, two generators of the Lie algebra $\su$: the
\emph{linear} grid lets the position $\eta^2$ close \emph{linearly} on $\Hs$ through the
dilation \eqref{eq:su11}. (We do not carry out the contraction of the Hahn relations
themselves, whose structure constants depend on $N$; the statement is about what the two
bispectral operators become.) A \emph{quadratic} grid does
not lead to a Lie algebra: for the para--Racah Hamiltonian, whose bispectral pair generates
the Racah algebra $\ralg$, the limiting pair generates the quadratic Jacobi algebra $\jalg$
\cite{PRcont}. Each $\R$--sector of $\Hs$ carries a positive discrete series
$\mathcal D^+_{\nu}$ of $\su$, the eigenvalues of $K_0=\tfrac14\Hs$ being $\nu+t$,
$t=0,1,2,\dots$, with the two Bargmann indices
\begin{equation}
\nu_{\mathrm{even}}=\frac{2-\gamma}{4},\qquad \nu_{\mathrm{odd}}=\frac{2+\gamma}{4},
\end{equation}
by \eqref{eq:genH}, separated by $\tfrac\gamma2$ --- the algebraic counterpart of the
$\gamma$--offset of the two channels. The reflection $\R$ separates the two series; at
$\gamma=1$ the two indices are $\tfrac14$ and $\tfrac34$, those of the even and odd sectors
of the ordinary oscillator, which $\R$ combines with $\su$ into the Lie superalgebra $\osp$
of the parabosonic oscillator with $\mu=0$. This is the algebraic image of
the analytic statement of \S\ref{sec:reflection}: the discrete Hahn bispectrality becomes the
$\su$ dynamical symmetry of the singular oscillator, $\mathbb Z_2$--graded by the mirror.

\section{The case of even $N$}
\label{sec:even}

Everything so far specialized the general setting of \S\ref{sec:chain} to $p=1$. We now take $p=0$, (i.e., $N=2j$
even), an odd number $N+1=2j+1$ of sites and a central \emph{site}, $n=j$. The bi--lattice
\eqref{eq:bilattice} specializes to
\begin{equation}
\label{eq:bilatticeeven}
x_{2s}=2s,\quad s=0,\dots,j,\qquad x_{2s+1}=2s+\gamma,\quad s=0,\dots,j-1.
\end{equation}
As for the odd case, we work with $j$ directly and with a shifted and re--scaled
position, now measured from the central site,
\begin{equation}
    \label{eq:etaeven}
    \eta:=\frac{n-j}{\sqrt{j+\frac{1}{2}}}\,,
\end{equation}
so that $n=j+\sqrt{j+\tfrac12}\,\eta$ and $N-n=j-\sqrt{j+\tfrac12}\,\eta$; the scale
$\sqrt{j+\tfrac12}=\sqrt{(N+1)/2}$ is that of \eqref{eq:etanj}, the site reversal
$n\mapsto N-n$ is $\eta\mapsto-\eta$, and this one variable serves below both for the
difference equation and for the polynomials.
The recurrence coefficients
\eqref{eq:recurrenceB}--\eqref{eq:recurrenceU} at $p=0$ become, in this variable,
\begin{equation}
\label{eq:receven}
\begin{aligned}
\widetilde B_n&=j-\tfrac12+\frac\gamma2+\frac{(\gamma-1)(j+\tfrac12)}{4(j+\tfrac12)\eta^2-1},\\
\widetilde U_n&=\frac14\Big[\big(j+\tfrac12\big)^2-\big(\sqrt{j+\tfrac12}\,\eta-\tfrac12\big)^2\Big]
\Big[1-\frac{(\gamma-1)^2}{\big(2\sqrt{j+\tfrac12}\,\eta-1\big)^2}\Big],
\end{aligned}
\end{equation}
exactly; in $\widetilde U_n$, $\sqrt{j+\tfrac12}\,\eta-\tfrac12=n-j-\tfrac12$ is the position of
the bond $(n-1,n)$, half a step to the left of the site $n$, and the two bonds of a site are
at $\sqrt{j+\tfrac12}\,\eta\mp\tfrac12$.
 This section shows that the continuum limit is the same as for $p=1$, and how the
parity of $N$ is absorbed. We keep the account short, the computations being those of
\S\S\ref{sec:eqlimit}--\ref{sec:polylimit} with different lattice data.

\paragraph{The impurity is shared between couplings and fields.}
The first factor of $\widetilde U_n$ in \eqref{eq:receven} is the Krawtchouk coupling
$U_n^K=\tfrac n4(2j+1-n)$ for $N=2j$; the second is the impurity. Compared with
\eqref{eq:impurity}, the impurity has changed shape. The coupling factor depends on
$\gamma$ only through $(\gamma-1)^2$ and is at most $1$: every bond is weakened, by
$(\gamma-1)^2/\big(2\sqrt{j+\tfrac12}\,\eta-1\big)^2$, with no reversal of sign at the core
(the two central bonds, $n=j,j+1$, carry the factor $\gamma(2-\gamma)$). The sign of
$\gamma-1$ is carried by the \emph{fields} alone, which are no longer constant: away from the
centre $\widetilde B_n$ exceeds its asymptotic value $j-1/2+\tfrac\gamma2$ by
$(\gamma-1)(j+\tfrac12)/\big(4(j+\tfrac12)\eta^2-1\big)$, of the same order $1/\eta^2$
as the coupling defect, positive for $\gamma>1$ and negative for $\gamma<1$, while
at the central site the excess is $-(\gamma-1)(j+1/2)$ --- the same reversal between tail and core
that the couplings display for odd $N$ (Figure~\ref{fig:even}). This is the counterpart, on
the linear grid, of an observation made on the para--Racah Hamiltonian \cite{paraRacahChain}: for
even $N$ both the couplings and the magnetic fields carry the central defect.

\begin{figure}[ht]
\centering
\begin{tikzpicture}
\begin{axis}[
  width=0.47\textwidth,height=6.2cm,
  xlabel={bond $n$},ylabel={$\widetilde U_n/U_n^{K}$},
  title={\small(a) couplings, $N=40$},
  xmin=1,xmax=40,ymin=0.6,ymax=1.05,
  legend style={at={(0.97,0.05)},anchor=south east,draw=none,font=\footnotesize},
  tick label style={font=\footnotesize},label style={font=\small},
]
  \addplot[blue,thick,mark=*,mark size=0.9pt,samples=40,domain=1:40]
    {1-0.25/((2*x-41)^2)};
  \addlegendentry{$\gamma=0.5$ and $\gamma=1.5$}
  \addplot[black,dashed,thick,samples=40,domain=1:40]{1};
  \addlegendentry{$\gamma=1$}
\end{axis}
\end{tikzpicture}\hfill
\begin{tikzpicture}
\begin{axis}[
  width=0.47\textwidth,height=6.2cm,
  xlabel={site $n$},ylabel={$\widetilde B_n-\bar B$},
  title={\small(b) fields, $N=40$},
  xmin=0,xmax=40,ymin=-11,ymax=11,
  legend style={at={(0.97,0.97)},anchor=north east,draw=none,font=\footnotesize},
  tick label style={font=\footnotesize},label style={font=\small},
]
  \addplot[blue,thick,mark=*,mark size=0.9pt,samples=41,domain=0:40]
    {(-0.5)*41/(2*((2*x-40)^2-1))};
  \addlegendentry{$\gamma=0.5$}
  \addplot[black,dashed,thick,samples=41,domain=0:40]{0};
  \addlegendentry{$\gamma=1$}
  \addplot[red,thick,mark=triangle*,mark size=1.1pt,samples=41,domain=0:40]
    {(0.5)*41/(2*((2*x-40)^2-1))};
  \addlegendentry{$\gamma=1.5$}
\end{axis}
\end{tikzpicture}
\caption{The impurity for even $N$, $N=40$. (a) The coupling ratio
$\widetilde U_n/U_n^K=1-(\gamma-1)^2/\big(2\sqrt{j+\frac12}\,\eta-1\big)^2$: every bond is weakened, by an
amount that depends on $\gamma$ only through $(\gamma-1)^2$, so that $\gamma=\tfrac12$ and
$\gamma=\tfrac32$ give the same couplings. (b) The excess of the fields over their asymptotic
value $\bar B=\tfrac{N-1+\gamma}2$, namely $(\gamma-1)(j+\frac12)/\big(4(j+\frac12)\eta^2-1\big)$, which carries the sign of $\gamma-1$
and is reversed at the central site. In the continuum limit only the combination of the two
survives, and it is the same $1/\eta^2$ term as for $N$ odd.}
\label{fig:even}
\end{figure}

\paragraph{The same limit.}
The continuum sees only the sum of the two contributions. In the bulk, at a site of
fixed position $\eta\neq0$, the two couplings adjacent to it, at $\sqrt{j+\tfrac12}\,\eta\mp\tfrac12$,
add up to
$j+1/2-\tfrac{\eta^2}2-\tfrac{(\gamma-1)^2}{8\eta^2}+O(1/j)$ and the field is
$j-1/2+\tfrac\gamma2+\tfrac{\gamma-1}{4\eta^2}+O(1/j)$; the terms of order $j$ cancel as
before, and in the energy $\widetilde B_n-\sqrt{\widetilde U_n}-\sqrt{\widetilde U_{n+1}}$
seen by an alternating vector the two defects combine as
\begin{equation}
\label{eq:combine}
\frac{\gamma-1}{4\eta^2}+\frac{(\gamma-1)^2}{8\eta^2}=\frac{(\gamma-1)(\gamma+1)}{8\eta^2}
=\frac{\gamma^2-1}{8\eta^2},
\end{equation}
exactly the impurity term of the case of odd $N$. The staggered ansatz $p_n=(-1)^n\varphi(\eta)$ then
leads, as in \S\ref{sec:eqlimit}, to the singular oscillator
\eqref{eq:singosc} with the same eigenvalue $2x+2-\gamma$: how the marginal impurity is
distributed between fields and couplings is a lattice matter, invisible in the limit. What is
sensitive to it is the central layer. The central site $\eta=0$ has the field
$\widetilde B_j=(2-\gamma)(j+1/2)-1+\tfrac\gamma2$ and two equal bonds
$\tfrac12\sqrt{\gamma(2-\gamma)}\,(j+1/2)\,(1+O(j^{-2}))$, so that the energy seen there by an alternating vector is
$(j+1/2)\sqrt{2-\gamma}\,\big(\sqrt{2-\gamma}-\sqrt\gamma\big)+O(1)$, of order $j$ unless
$\gamma=1$, and the recurrence there is the symmetric matching condition
\begin{equation}
\label{eq:centreeven}
\varphi(0)=\frac12\sqrt{\frac{\gamma}{2-\gamma}}\;
\Big[\varphi\big(-(j+1/2)^{-1/2}\big)+\varphi\big((j+1/2)^{-1/2}\big)\Big]
\end{equation}
to leading order --- the smooth continuation only at $\gamma=1$. Being symmetric in the two
neighbours, \eqref{eq:centreeven} is satisfied identically by an $\R$--odd envelope and
constrains only the $\R$--even one; this is the counterpart of \eqref{eq:A:centre}, and, as
for $N$ odd, the boundary behaviour at the singular point is read off the polynomials.

\paragraph{The polynomials and the mirror.}
It was shown in \cite{BCLMNV} that for $N=2j$ and $n\le j$, the para-Krawtchouk polynomials restricted to the bi-lattice \eqref{eq:bilatticeeven} can be expressed in terms of $_3F_2$ hypergeometric functions
\begin{equation}
\label{eq:Qeven}
\begin{aligned}
P_n(x_{2t})&=\kappa_n\,{}_3F_2\!\left(\begin{matrix}-t,\,-n,\,n-N\\
-j+1-\tfrac\gamma2,\,-j\end{matrix};1\right)\,,\\
P_n(x_{2t+1})&=\kappa_n\,\frac{(j-n)\,(-j+\tfrac\gamma2)_n}{j\,(-j+1-\tfrac\gamma2)_n}\;
{}_3F_2\!\left(\begin{matrix}-t,\,-n,\,n-N\\
-j+\tfrac\gamma2,\,-j+1\end{matrix};1\right),
\end{aligned}
\end{equation}
We can identify each hypergeometric function as dual Hahn polynomials \eqref{eq:def-hahnpols} and get
\begin{equation}
\begin{aligned}
\label{eq:pols-Neven-hahn}
&P_n(x_{2t})=\kappa_n R_t(\lambda(n);\alpha,\beta,j),\\
&P_n(x_{2t+1})=\kappa_n R_t(\lambda(n);\beta+1,\alpha,j-1)
\end{aligned}
\end{equation}
where $\lambda(n)=n(n-N)$ and $(\alpha,\beta)=(-j-\gamma/2, -j+\gamma/2-1).$
\begin{remark}
At $n=j$ the factor $(j-n)$ vanishes: $P_j$ vanishes on the
whole odd sublattice, whose $j$ points are its zeros. This is forced by the persymmetry
with a central site: \eqref{eq:mirror} at $n=j=N-j$ reads $p_j(x_s)=\epsilon_sp_j(x_s)$,
so that $p_j(x_s)=0$ whenever $\epsilon_s=-1$, and for $N$ even $\epsilon_s=(-1)^{N+s}=(-1)^s$
is $-1$ exactly on the odd sublattice.
\end{remark}
For the two sets of parameters \eqref{eq:pols-Neven-hahn}, the dual Hahn polynomials obey the recurrence  \eqref{eq:dHrec} with the coefficients
\begin{align}
    &C^o_t=t(t+\tfrac\gamma2)\,,\qquad A^o_t=(t-j+\tfrac\gamma2)(t-j+1)\,,\\
    &C^e_t=t(t-\tfrac\gamma2)\,,\qquad A^e_t=(t-j+1-\tfrac\gamma2)(t-j)\,.
\end{align}
With $n=j+\sqrt{j+\tfrac12}\,\eta$ as in \eqref{eq:etaeven}, one has
\begin{equation}
  \lambda(n)=n(n-N)=(j+\tfrac12)\,\eta^2-j^2\,,
\end{equation}
and
$A_t=j^2-j(2t+1\mp\tfrac\gamma2)+O(1)$, and the recurrence contracts to \eqref{eq:Lagrec} and we obtain the same Laguerre families.  The envelopes follow as in \S\ref{sec:polylimit}: for $n\le j$, i.e.\ $\eta\le0$, with
$|\eta|\sqrt{j+\tfrac12}=j-n$ the distance to the central site, the $\gamma$--dependent factor of $\kappa_n$
gives
\begin{equation}
    \frac{(-j+1-\tfrac\gamma2)_n}{(-j+\tfrac12)_n}=\frac{\Gamma(j+\gamma/2)\Gamma(|\eta|\sqrt{j+\frac12}+1/2)}{\Gamma(j+1/2)\Gamma(|\eta|\sqrt{j+\frac12}+\gamma/2)}
    \sim |\eta|^{\frac{1-\gamma}{2}}j^{\frac{\gamma-1}{4}}.
    \label{even_approx}
\end{equation}
We also have the factor in
the odd--sublattice \eqref{eq:Qeven}
\begin{equation}
\frac{(j-n)(-j+\frac{\gamma}{2})_n}{j(-j+1-\frac{\gamma}{2})_n}=\frac{|\eta|\sqrt{j+\frac12}}{j}\frac{\Gamma(j+1-\gamma/2)\Gamma(|\eta|\sqrt{j+\frac12}+\gamma/2)}{\Gamma(j+\gamma/2)\Gamma(|\eta|\sqrt{j+\frac12}+1-\gamma/2)}\sim|\eta|^\gamma\,j^{-\gamma/2},
\end{equation}
the power of $j$ being absorbed, as for $N$ odd, by the weight ratio $\sqrt{w_1/w_0}$,
so that \eqref{eq:genH} holds verbatim.

The mirror
bookkeeping of \S\ref{sec:reflection} changes in two places at once. The site reversal
$\Rc$ is now centred on the central site, $\eta\mapsto-\eta$, and leaves the
staggering $(-1)^n$ invariant, so that it acts on the envelopes as $+\R$; and
$\epsilon_s=(-1)^{N+s}=(-1)^s$ is now $+1$ on the even sublattice and $-1$ on the odd one.
The two changes compensate: the even sublattice is again $\R$--even, with the exponent
$\tfrac{1-\gamma}2$, and the odd sublattice $\R$--odd, with $\tfrac{1+\gamma}2$. The two
parities of $N$ thus produce one and the same limit, with one and the same grading, the two
sign changes compensating as they must.

\section{The singular point: self--adjoint extension and the monopole analogy}
\label{sec:extension}

The operator $\Hs$ is singular at $\eta=0$, where its coefficient
$g:=\tfrac{\gamma^2-1}{4}$ lies, for $0<\gamma<2$, in the interval $(-\tfrac14,\tfrac34)$:
the two Frobenius exponents $\sigma_\mp=\tfrac{1\mp\gamma}2$, the roots of
$\sigma(\sigma-1)=g$, are real and both solutions $\eta^{\sigma_\mp}$ are square--integrable
near the origin (limit--circle case), so that $\Hs$, on each half--line, is not essentially
self--adjoint and a boundary condition must be imposed. This is the situation analysed in
\cite{DHV} for the s--wave sector of a spin--$\tfrac12$ particle in the field of a magnetic
monopole, where the monopole coupling cancels the centrifugal barrier: there the radial
operator on the half--line $r>0$, and its counterpart $-\tfrac{d^2}{dr^2}+\tfrac{g}{r^2}$
when a $1/r^2$ potential is added, have a one--parameter family of self--adjoint extensions,
and the dilation generator of the $\so$ dynamical algebra is compatible with the boundary
condition of exactly two of them --- those on which the eigenfunctions are pure Frobenius
solutions, called the symmetric and antisymmetric extensions in \cite{DHV}. For the compact
generator $-\tfrac{d^2}{dr^2}+\tfrac g{r^2}+r^2$ of the same algebra, these two extensions
carry the eigenfunctions $r^{\sigma}e^{-r^2/2}L_n^{(\sigma-1/2)}(r^2)$, with $\sigma$ one of
the two exponents --- the wave functions that \cite{DHV} obtains by algebraic methods ---
and it is on them only that $\so\cong\su$ is realized with self--adjoint generators
\cite[\S V]{DHV}. The two channels of \S\ref{sec:reflection} are precisely these two
distinguished extensions, here realized simultaneously and graded by the reflection: the
$\R$--even functions carry the exponent $\sigma_-=\tfrac{1-\gamma}{2}$ and the $\R$--odd ones
$\sigma_+=\tfrac{1+\gamma}{2}$. The discrete model supplies this datum, which the bulk
equation \eqref{eq:singosc} does not contain: which extension each sublattice selects is
read off the polynomials (Proposition~\ref{prop:conv}), and the energy offset $\gamma$
between the two channels is the physical residue of the impurity.

\section{Comparison with the Krawtchouk and Hahn oscillators}
\label{sec:compare}

The result is best placed beside the two finite oscillators of the Askey scheme nearest to
it. The $\sutwo$ (Krawtchouk) finite oscillator contracts to the \emph{canonical} oscillator
(Hermite); the $\utwo_\alpha$ (Hahn) finite oscillator of Jafarov, Stoilova and Van der Jeugt
\cite{JSV} contracts to the \emph{parabosonic} oscillator ($\osp$), its wavefunctions
tending to the generalized Hermite functions $|\eta|^{\alpha+1/2}e^{-\eta^2/2}L_n^{(\alpha)}(\eta^2)$
(even) and $\eta|\eta|^{\alpha+1/2}e^{-\eta^2/2}L_n^{(\alpha+1)}(\eta^2)$ (odd) of a
\emph{single} parameter, with exponents differing by one \cite[eqs.~(33), (35)]{JSV}. The
para--Krawtchouk oscillator sits apart: its couplings are Krawtchouk plus a central impurity,
and its continuum limit is the \emph{singular} oscillator ($\su$), with even and odd channels
$L_t^{(-\gamma/2)}$ and $L_t^{(+\gamma/2)}$ --- two Laguerre indices that do not differ
by one, and exponents differing by $\gamma$.

\begin{center}
\renewcommand{\arraystretch}{1.25}
\begin{tabular}{@{}llll@{}}
\toprule
finite model & couplings & continuum limit & wavefunctions\\
\midrule
Krawtchouk ($\sutwo$) & uniform & harmonic osc.\ (osc.\ alg.) & $H_n(\eta)$\\
Hahn ($\utwo_\alpha$) \cite{JSV} & Hahn & parabose osc.\ ($\osp$) &
$L_n^{(\alpha)}$, $L_n^{(\alpha+1)}$\\
\textbf{para--Krawtchouk} & Krawtchouk $+$ impurity & \textbf{singular osc.\ ($\su$)} &
$L_t^{(-\gamma/2)}$, $L_t^{(+\gamma/2)}$\\
\bottomrule
\end{tabular}
\end{center}

The link is made precise by the Hahn--submodule decomposition of \cite{BCLMNV} recalled in
\S\ref{sec:polylimit}: the two sublattices --- the $\Rc=\mp1$ sectors of the finite model,
which become the $\R=\pm1$ channels in the limit --- carry two Hahn families
whose parameters $(\alpha_e,\beta_e)$ and $(\alpha_o,\beta_o)$ are interchanged
\eqref{eq:Qdef}. Each sublattice, taken alone, contracts (\S\ref{sec:polylimit}) to one
half --- the even or the odd part --- of a generalized Hermite family, as the Hahn
oscillator of \cite{JSV} contracts to a full such family; but the para--Krawtchouk
Hamiltonian combines the two sublattices on one line, and because the two parabosonic
parameters are opposite ($\mu$ and $-\mu$ with $\mu=\tfrac{1-\gamma}2$, i.e.,\ Laguerre
indices $\mp\gamma/2$, \S\ref{sec:reflection}), the union is the scalar singular oscillator
rather than a single parabose oscillator. In one sentence: the para--Krawtchouk oscillator
combines, through the reflection, two halves of generalized Hermite families of opposite
parameters into a singular oscillator whose two channels are offset by $\gamma$.

\section{Comparison with the $\su$ Meixner oscillator on the half--lattice}
\label{sec:meixner}

The singular oscillator is also the continuum limit of a very different discrete model, and
the comparison illuminates both. On $\ell^2(\mathbb N_0)$, with $m\in\mathbb N_0$ the
lattice variable, the positive discrete series
$\mathcal D^+_\nu$ of $\su$ is realized by difference operators, with compact generator
\begin{equation}
\label{eq:meixnerJ0}
J_0=\nu-\frac{1}{1-c}\Big[\hat m\,(S^{-1}-1)+c\,(\hat m+2\nu)(S-1)\Big],\qquad
J_0\,M_n(m;2\nu,c)=(n+\nu)\,M_n(m;2\nu,c),
\end{equation}
where $S^{\pm1}f(m)=f(m\pm1)$, $\hat m$ is multiplication by $m$, and
$M_n(m;2\nu,c)={}_2F_1(-n,-m;2\nu;1-1/c)$ are the Meixner
polynomials \cite[\S9.10]{KLS}: this is the \emph{Meixner oscillator} of Atakishiyev, Jafarov,
Nagiyev and Wolf \cite{AJNW} (see also \cite{JVdJ}). The parameter $c\in(0,1)$ is pure
gauge in a precise sense: the family \eqref{eq:meixnerJ0} is the orbit of the diagonal
operator $\hat m+\nu$ under conjugation by a hyperbolic one--parameter subgroup
($c=\tanh^2(\tau/2)$) \cite{AJNW}, so the spectrum $\nu+\mathbb N_0$ and the Casimir
$\nu(\nu-1)$ are the same for every $c$. Setting $c=1-\Delta$, $m=y/\Delta$ and letting
$\Delta\to0^+$, the $O(\Delta^{-1})$ terms cancel between the two brackets of
\eqref{eq:meixnerJ0} --- the same compensation that preserved the marginal term in
\S\ref{sec:eqlimit} --- and $J_0\to\nu-[y\partial_y^2+(2\nu-y)\partial_y]$, the Laguerre
operator, whose eigenfunctions with eigenvalue $n+\nu$ are $L^{(2\nu-1)}_n(y)$. With
$y=\eta^2$ ($\eta>0$) and the gauge $\Phi=\eta^{\sigma}e^{-\eta^2/2}$, $\sigma=2\nu-\tfrac12$,
one finds exactly
\begin{equation}
\label{eq:meixnerlimit}
\Phi\,J_0\,\Phi^{-1}
=\tfrac14\Big[-\partial_\eta^2+\eta^2+\frac{\sigma(\sigma-1)}{\eta^2}\Big],
\qquad \text{eigenfunctions}\ \ \eta^{\sigma}e^{-\eta^2/2}L^{(2\nu-1)}_n(\eta^2),
\end{equation}
the singular oscillator on the half--line, reached through the classical
Meixner~$\to$~Laguerre confluence.

The bridge with the para--Krawtchouk limit is exact. Matching the exponent $\sigma$ with
the boundary exponents $\sigma_\mp=\tfrac{1\mp\gamma}2$ of the two channels of
\S\ref{sec:reflection} gives $\nu=\tfrac{2\mp\gamma}4$ --- precisely the two Bargmann
indices $\nu_{\rm even}$, $\nu_{\rm odd}$ of \S\ref{sec:algebra} --- with the Laguerre indices
$2\nu-1=\mp\gamma/2$ of \eqref{eq:genH} and $\Hs=4J_0$ on each channel. \emph{The
para--Krawtchouk continuum is thus two Meixner oscillators of complementary Bargmann indices
$\tfrac{2\mp\gamma}4$: each $\R$--channel is the even, respectively odd, extension to the
full line of a half--line Meixner limit.} Conversely, the Meixner model is a single channel:
it has no mirror, no bi--lattice and no fractional revival --- the fractional--revival physics
lives entirely in the pairing of the two channels, i.e.,\ in the $\gamma$--offset of the two
towers. The two roads to the same
operator are otherwise very different:

\begin{center}
\renewcommand{\arraystretch}{1.2}
\footnotesize
\begin{tabular}{@{}lll@{}}
\toprule
 & para--Krawtchouk Hamiltonian & Meixner model \cite{AJNW}\\
\midrule
lattice & finite, $N+1$ sites & $\mathbb N_0$, already infinite\\
polynomials & para--Krawtchouk (bi--lattice) & Meixner\\
the limit & $N\to\infty$: the system grows & $c\to1^-$: the realization degenerates\\
spectrum along it & $\{0,\gamma,2,2+\gamma,\dots\}$, rigid & $\nu+\mathbb N_0$, rigid\\
origin of $1/\eta^2$ & marginal impurity at the centre & representation label $\nu$ at the edge\\
limit operator & singular osc.\ on $\mathbb R$: $\mathcal D^+_{(2-\gamma)/4}\oplus\mathcal D^+_{(2+\gamma)/4}$
 & singular osc.\ on $\mathbb R_{>0}$: one $\mathcal D^+_\nu$\\
algebra & Hahn $\halg$ at finite $N$, $\su$ in the limit & $\su$ at every $c$\\
mirror, fractional revival & yes: $e^{-\ii\pi\N}=\xi I+\zeta'\R$ & absent\\
\bottomrule
\end{tabular}
\end{center}

Three of the contrasts deserve a word. First, the nature of the limit: $N\to\infty$
\emph{grows} the lattice and produces the operator at a band edge --- whence the staggering
$(-1)^n$ --- whereas $c\to1$ leaves the system untouched (same space, same spectrum, same
Casimir) and only degenerates the \emph{realization}; consistently, both limits leave the
spectrum rigid and reshape the eigenfunctions alone. Second, the origin of the $1/\eta^2$
term: a marginal impurity at the \emph{centre} of the lattice versus the representation label
$\nu$ sitting at the \emph{edge} $m=0$ of the half--lattice. The symmetrized couplings of
\eqref{eq:meixnerJ0} are proportional to $\sqrt{(m+1)(m+2\nu)}=m+\nu+\tfrac12
-\tfrac{(2\nu-1)^2}{8m}+O(m^{-2})$; relative to the values $\nu=\tfrac14$ and $\tfrac34$,
for which $\sigma(\sigma-1)=0$, their tail is $-\sigma(\sigma-1)/8m$: a $1/m$ tail whose
coefficient is that of the potential. Measured in the continuum metric
--- $\eta=(n-j-1)/\sqrt{j+1}$ at the centre, $\eta\approx\sqrt{\Delta m}$ at the edge --- both
tails are inverse--square in the distance to the singular point: marginal in exactly the
sense of \S\ref{sec:eqlimit}, which is why a finite $1/\eta^2$ coefficient survives in
both cases. Third, and most tellingly, the algebra. A finite matrix \emph{cannot} carry
$\su$ --- all its non--trivial unitary irreducible representations are
infinite--dimensional --- and this is consistent with the bispectral pair of the
para--Krawtchouk model generating the quadratic Hahn algebra $\halg$, with $N$ in the
structure constants, $\su$ emerging only in the limit of \S\ref{sec:algebra}, and emerging
\emph{twice}. On the
half--lattice the algebra is available from the outset: the model is an element of a fixed
$\mathcal D^+_\nu$, the parameter $c$ is a group boost the Casimir does not see, and the
continuum limit is a degeneration of the realization within one and the same
representation. That the lattice is ``already infinite'' is thus not a convenience but the
algebraic point itself. Note finally that in the $\mathfrak{sl}(2|1)$ oscillator of \cite{JVdJ},
whose position wavefunctions are Meixner polynomials tending to the paraboson (Laguerre)
ones, the two $\su$ constituents have Bargmann indices $\beta/2$ and $(\beta+1)/2$,
($\beta>0$ the label of the representation in \cite{JVdJ}), differing by $\tfrac12$ ---
exactly the configuration $\nu_{\rm odd}-\nu_{\rm even}=\tfrac12$
that the para--Krawtchouk model attains only at $\gamma=1$; for $\gamma\neq1$ the offset
$\gamma/2$ obstructs any such superalgebra pairing of the two channels, in accord with the
breaking of the supersymmetry with reflections noted in \S\ref{sec:FR}. The persistence of
the symmetry algebra along the limit on infinite lattices extends to higher rank: the
two--dimensional discrete oscillator built on bivariate Meixner polynomials keeps the
$\sutwo$ symmetry of its continuum limit at every scale \cite{GGLV}.

\section{Doubling}
\label{sec:doubling}

The combination of two orthogonal families into one mirror--symmetric operator is, when
the parameters of the two families differ by units, the \emph{doubling} of Oste and Van der
Jeugt \cite{OVdJ1}, who classify the ways in which two Hahn, dual Hahn or Racah families
whose parameters differ by units can be made to obey a common pair of recurrence relations.
Each such double is a Christoffel--Geronimus pair, as is every quadratic decomposition of an
orthogonal family \cite{MP}, and the classification identifies the cases where the transform
stays within the family. The finite oscillator that a double carries has its reflection
inside the algebra --- an extension of $\sutwo$ in which a reflection operator, commuting or
anticommuting with the generators, enters the commutator of the two ladder operators, shown
in \cite{OVdJ2} to be a special case of the Bannai--Ito algebra $\bialg$ --- and it contracts
to the parabosonic (Wigner--Dunkl) oscillator \cite{OVdJ2}. The para--Krawtchouk Hamiltonian
combines its two Hahn families with the \emph{continuous} offset $\gamma$ instead, and this
is what the present limit registers: a continuous offset produces the scalar centrifugal
term $\tfrac{\gamma^2-1}{4}$, whose two channels carry the Frobenius exponents
$\tfrac{1\mp\gamma}{2}$, whereas an integer offset produces the reflection--graded
$\mu(\mu-\R)/\eta^2$ of the Wigner--Dunkl oscillator, whose exponents differ by one. The two
agree at $\gamma=1$, where the para--Krawtchouk limit is the ordinary oscillator, i.e.,\ the
Wigner--Dunkl oscillator with $\mu=0$. The doubling
framework and the reducibility of the four families (para--Krawtchouk, para--Racah,
$q$--para--Racah and para--Bannai--Ito) are taken up in \cite{BCLMNV}; doubling also
underlies the exactly solvable inhomogeneous Su--Schrieffer--Heeger models of \cite{CLMPV},
of which the para--Bannai--Ito Hamiltonian of \cite{PBIcont} is an instance for a
particular value of its parameters.

\section{Fractional revival and harmonic evolution in the limit}
\label{sec:FR}

\paragraph{One--excitation dynamics, state transfer and revival.}
The matrix $J$ generates a dynamics of physical interest. An $XX$ spin chain of $N+1$ sites
with nearest--neighbour couplings $\sqrt{U_n}$ and local magnetic fields $B_n$ conserves the number of
spins up, and on its one--excitation subspace --- spanned by the states $\ket{n}$ in which the
single spin up sits at site $n$ --- its Hamiltonian acts precisely as $J$ \cite{VZ,paraRacahChain}.
We shall not introduce the spin chain itself; all that matters here is that the transport of an excitation
along it is the evolution $e^{-\ii\theta J}$, $\theta$ the time, on $\mathbb C^{N+1}$, whose
generator is the discrete Hamiltonian of this paper. Two properties of this evolution are of interest \cite{GVZ}.
\emph{Perfect state transfer} at time $T$ means that an excitation launched at one end arrives
in full at the other, $e^{-\ii TJ}\ket{0}=e^{\ii\phi}\ket{N}$; \emph{fractional revival} at
time $T$ means that it reappears as a coherent superposition localized at the two ends,
\begin{equation}
\label{eq:FRdef}
e^{-\ii TJ}\ket{0}=\xi\,\ket{0}+\zeta\,\ket{N},\qquad |\xi|^2+|\zeta|^2=1,
\end{equation}
with probability $|\xi|^2$ of being found back at the origin and $|\zeta|^2$ of having been
transferred, perfect transfer being the case $\xi=0$. Both are read off the spectral data of
$J$, as follows. By \eqref{eq:eigvec}, $\ket{0}=\sum_s\sqrt{w_s}\,\ket{x_s}$ ($p_0=1$), and
by \eqref{eq:eps}, $\ket{N}=\Rc\,\ket{0}=\sum_s\epsilon_s\sqrt{w_s}\,\ket{x_s}$; hence
\eqref{eq:FRdef} holds if and only if
\begin{equation}
\label{eq:FRphases}
e^{-\ii Tx_s}=\xi+\zeta\,\epsilon_s\qquad\text{for every }s,
\end{equation}
that is, if and only if the phases $e^{-\ii Tx_s}$ take a single value $e^{\ii\phi_+}$ on the
eigenvalues with $\epsilon_s=+1$ and a single value $e^{\ii\phi_-}$ on those with
$\epsilon_s=-1$. Then $\xi=\tfrac12(e^{\ii\phi_+}+e^{\ii\phi_-})$,
$\zeta=\tfrac12(e^{\ii\phi_+}-e^{\ii\phi_-})$, and since $I=\sum_s\ket{x_s}\bra{x_s}$ and
$\Rc=\sum_s\epsilon_s\ket{x_s}\bra{x_s}$, \eqref{eq:FRphases} is the operator identity
\begin{equation}
\label{eq:FRop}
e^{-\ii TJ}=e^{\ii\phi}\big(\cos\vartheta\,I+\ii\sin\vartheta\,\Rc\big)=e^{\ii\phi}\,e^{\ii\vartheta \Rc},
\qquad \phi=\tfrac12(\phi_++\phi_-),\quad \vartheta=\tfrac12(\phi_+-\phi_-).
\end{equation}
At time $T$ every state, not only $\ket{0}$, becomes the superposition of itself and its
mirror image, with weights $\cos^2\vartheta$ in place and $\sin^2\vartheta$ on the image. Perfect
state transfer is $\vartheta\equiv\tfrac\pi2\pmod\pi$, and at the multiples $rT$, $r=1,2,\dots$, the angle is
$r\vartheta$, $\Rc$ being an involution. On a bi--lattice, where $\epsilon_s$ is
constant on each sublattice, the criterion reads: \emph{$e^{-\ii Tx_s}$ must be constant on
each of the two sublattices.}

\paragraph{Fractional revival on the linear bi--lattice.}
Here $\epsilon_s=-1$ on the even sublattice $\{2s\}$ and $+1$ on the odd one $\{2s+\gamma\}$,
and the criterion is met at $T=\pi$ for every $\gamma$: $e^{-\ii\pi x_{2s}}=1$ and
$e^{-\ii\pi x_{2s+1}}=e^{-\ii\pi\gamma}$, i.e.,\ $\phi_-=0$ and $\phi_+=-\pi\gamma$.

\begin{proposition}[Fractional revival]
\label{prop:FR}
For every odd $N$,
\begin{equation}
\label{eq:FRfinite}
e^{-\ii\pi J}=\xi\,I+\zeta\,\Rc
=e^{-\ii\pi\gamma/2}\Big(\cos\tfrac{\pi\gamma}2\,I-\ii\sin\tfrac{\pi\gamma}2\,\Rc\Big),\qquad
\xi=\frac{1+e^{-\ii\pi\gamma}}2,\quad \zeta=\frac{e^{-\ii\pi\gamma}-1}2,
\end{equation}
the revival angle being $\vartheta=-\pi\gamma/2$ and the phase $\phi=-\pi\gamma/2$. The return
and transfer probabilities are $|\xi|^2=\cos^2\tfrac{\pi\gamma}2$ and
$|\zeta|^2=\sin^2\tfrac{\pi\gamma}2$, functions of $\gamma$ alone; in particular
$e^{-\ii\pi J}=-\Rc$ at $\gamma=1$: perfect state transfer, with phase $-1$.
\end{proposition}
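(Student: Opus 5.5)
The plan is to use the spectral decomposition of $J$ together with the mirror eigenvalues $\epsilon_s$ from \eqref{eq:eps}. Since the spectrum is simple, we have $I=\sum_s\ket{x_s}\bra{x_s}$ and $J=\sum_s x_s\ket{x_s}\bra{x_s}$. Because $\Rc$ commutes with $J$ and $\Rc\ket{x_s}=\epsilon_s\ket{x_s}$, we also have $\Rc=\sum_s\epsilon_s\ket{x_s}\bra{x_s}$. Hence
\[
e^{-\ii\pi J}=\sum_s e^{-\ii\pi x_s}\ket{x_s}\bra{x_s},
\]
and the claimed identity $e^{-\ii\pi J}=\xi I+\zeta\Rc$ reduces to the scalar condition $e^{-\ii\pi x_s}=\xi+\zeta\epsilon_s$ for every $s$. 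This is exactly \eqref{eq:FRphases} with $T=\pi$.

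Next, we insert the data for odd $N$ ($p=1$). By \eqref{eq:bilattice}, $x_{2s}=2s$ and $x_{2s+1}=2s+\gamma$. By \eqref{eq:eps-sublattice}, $\epsilon_{2s}=-1$ and $\epsilon_{2s+1}=+1$. On the even sublattice, $e^{-\ii\pi\cdot2s}=1$, so we need $\xi-\zeta=1$. On the odd sublattice, $e^{-\ii\pi(2s+\gamma)}=e^{-\ii\pi\gamma}$, so we need $\xi+\zeta=e^{-\ii\pi\gamma}$. Both conditions are independent of $s$, and solving them gives $\xi=\tfrac12(1+e^{-\ii\pi\gamma})$ and $\zeta=\tfrac12(e^{-\ii\pi\gamma}-1)$, as stated.

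The remaining step is to rewrite the result in trigonometric form. Factoring out $e^{-\ii\pi\gamma/2}$ gives $\xi=e^{-\ii\pi\gamma/2}\cos\tfrac{\pi\gamma}2$ and $\zeta=-\ii\,e^{-\ii\pi\gamma/2}\sin\tfrac{\pi\gamma}2$. This yields the second expression in \eqref{eq:FRfinite}, and matching it against \eqref{eq:FRop} gives $\phi=\vartheta=-\pi\gamma/2$. Taking moduli gives $|\xi|^2=\cos^2\tfrac{\pi\gamma}2$ and $|\zeta|^2=\sin^2\tfrac{\pi\gamma}2$.

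At $\gamma=1$ we get $\xi=0$ and $\zeta=-1$, so $e^{-\ii\pi J}=-\Rc$. This is perfect state transfer with phase $-1$.

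The only point that needs care is the sign bookkeeping: we must make sure that $\epsilon_s=-1$ sits on the even sublattice when $N$ is odd. This is supplied by \eqref{eq:eps}, together with the fact that the overall index of each point of $\{2s\}$ is even. Everything else is a finite computation.
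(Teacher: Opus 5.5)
Your proof is correct and follows essentially the paper's route: the paper also uses the spectral resolutions of $I$ and $\Rc$ to reduce the identity to the scalar condition \eqref{eq:FRphases}. It then takes $\phi_-=0$ on the even sublattice ($\epsilon=-1$) and $\phi_+=-\pi\gamma$ on the odd one ($\epsilon=+1$), and reads off $\xi$, $\zeta$ and $\phi=\vartheta=-\pi\gamma/2$ from \eqref{eq:FRop}, which is the same computation you carry out by solving $\xi-\zeta=1$, $\xi+\zeta=e^{-\ii\pi\gamma}$ directly.
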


The proof is \eqref{eq:FRop} with the two phases above. The angle is free --- the offset
$\gamma$ of the bi--lattice \emph{is} the revival angle --- in contrast with the quadratic
bi--lattice $\{(s+a)^2,(s+c)^2\}$ of the para--Racah Hamiltonian, $a$ and $c$ its grid
parameters, on which the phases $e^{-\ii T(s+a)^2}$, $e^{-\ii T(s+c)^2}$ align only under
a Diophantine condition on $a$ and $c$ \cite{paraRacahChain,PRcont}. Because the
spectrum lies on a bi--lattice, the evolution $e^{-\ii\theta J}$ is periodic (for rational
$\gamma$) or quasi--periodic: the excitation performs a harmonic motion between the ends, and
$e^{-\ii\theta J}$ plays for the model the role of a discrete fractional Fourier transform, of
which the revival of Proposition~\ref{prop:FR} is the value at $\theta=\pi$. Any correct
continuum limit must preserve this structure; we shall see it does.

By \eqref{eq:genH}, the operator $\N:=\tfrac12(\Hs-2+\gamma)$ --- the number operator of the
singular oscillator and the continuum image of $J$ itself, by \eqref{eq:XYlimit} --- has the
eigenvalues $2t$, $t=0,1,2,\dots$, on its $\R$--even channel and $2t+\gamma$ on its $\R$--odd
channel: the two $\R$--sectors form the two Laguerre towers $\{0,2,4,\dots\}$ and
$\{\gamma,2+\gamma,\dots\}$, the full bi--lattice, \emph{offset by $\gamma$} --- the same
$\gamma$ that separates the Frobenius exponents and the Bargmann indices. The criterion above applies to $\N$ verbatim, with its eigenfunctions in place of the
eigenvectors and $\R$ in place of $\Rc$: $e^{-2\pi\ii t}=1$ on the
even channel and $e^{-\ii\pi(2t+\gamma)}=e^{-\ii\pi\gamma}$ on the odd one, so that, with
$P_\pm=\tfrac12(I\pm\R)$ the projectors on the two channels,
\begin{equation}
\label{eq:FRcont}
e^{-\ii\pi\N}=P_++e^{-\ii\pi\gamma}P_-=\xi\,I+\zeta'\,\R,\qquad
\xi=\frac{1+e^{-\ii\pi\gamma}}2,\ \ \zeta'=\frac{1-e^{-\ii\pi\gamma}}2=-\zeta,
\end{equation}
the value at $\theta=\pi$ of the fractional Fourier transform $\mathcal F_\theta=e^{-\ii\theta\N}$
of the singular oscillator (its half--period at $\gamma=1$). This is the discrete identity
\eqref{eq:FRfinite} with $\Rc$ replaced by $-\R$ --- the sign of the coefficient of the
mirror flips --- as it must be: the site reversal acts as $-\R$ on the envelopes
\eqref{eq:RcR}, the even sublattice being mirror--odd at finite $N$ and $\R$--even in the
continuum. Read on wave functions, \eqref{eq:FRcont} says that at $\theta=\pi$ every state
becomes the coherent superposition of itself and its mirror image,
$\psi\mapsto\xi\psi+\zeta'\R\psi$, with weight
$|\xi|^2=\cos^2\tfrac{\pi\gamma}2$ in place and $|\zeta'|^2=\sin^2\tfrac{\pi\gamma}2$ at the
reflected position: a packet localized on one side of the origin reappears as two copies, one
on each side --- the continuum counterpart of the mirror pair $\ket{0}$, $\ket{N}$. This is fractional
revival in the sense of wave--packet dynamics \cite{AP,Robinett}, the reconstruction of a
packet as a finite superposition of copies, here two copies related by the reflection; the
pure mirror image ($|\zeta'|=1$, at $\gamma=1$) is the continuum image of perfect state
transfer. The entire \emph{harmonic evolution} $e^{-\ii\theta J}$ of the discrete model thus passes to
$e^{-\ii\theta\N}$: at $\gamma=1$, $\theta=\pi$ it is pure parity (the continuum image of
PST); at $\gamma=\half$ it is, up to a global phase, $\tfrac1{\sqrt2}(I+\ii\R)$, a ``square
root of the reflection'' (the balanced, entangling revival). Fractional revival is thus a
property of the limiting Hamiltonian in its own right, read off its spectrum and its
reflection grading, and the reflection that carries it is the same object that grades the two
channels.

\begin{remark}[Supersymmetry with reflection]
Each channel of $\Hs$ factorizes separately: with
$\mathcal A_\sigma=\partial_\eta+\eta-\tfrac{\sigma}{\eta}$ acting in the channel of
exponent $\sigma$, one has $\Hs=\mathcal A_\sigma^\dagger\mathcal A_\sigma+(2\sigma+1)$
there, since $\sigma(\sigma-1)=\tfrac{\gamma^2-1}4$ for $\sigma=\sigma_\mp$: a
shape--invariant isotonic oscillator, the partner
$\mathcal A_\sigma\mathcal A_\sigma^\dagger+(2\sigma+1)$ having the centrifugal coefficient
$\sigma(\sigma+1)$. At
$\gamma=1$ the two channels are the even and odd sectors of the ordinary oscillator,
separated by $2$, and are paired by the supercharge $\tfrac1{\sqrt2}(\partial_\eta\R+\eta)$ of
supersymmetric quantum mechanics with reflections \cite{PVZ}, whose square is
$\tfrac12(\Hs-\R)$. For $\gamma\neq1$ the separation of the two towers is $2\gamma\neq2$ and
no supercharge can pair them (the partner of the channel of exponent $\sigma$ carries the
coefficient $\sigma(\sigma+1)$, which equals $\tfrac{\gamma^2-1}4$ only at $\gamma=1$): the
supersymmetry of \cite{PVZ} is broken by the impurity, the reflection $\R$ surviving only as
the $\mathbb Z_2$ grading of the two channels --- and $\R$ is absent from the potential, so
there is no Dunkl supersymmetry either.
\end{remark}

\paragraph{$N$ even.}
For even $N$ the criterion is met at $T=\pi$ with
$e^{\ii\phi_+}=1$ on the even sublattice ($\epsilon=+1$) and $e^{\ii\phi_-}=e^{-\ii\pi\gamma}$ on
the odd one, hence $\vartheta=+\pi\gamma/2$ and
\begin{equation}
\label{eq:FReven}
e^{-\ii\pi J}=\xi\,I+\zeta'\,\Rc,\qquad
\xi=\frac{1+e^{-\ii\pi\gamma}}2,\quad \zeta'=\frac{1-e^{-\ii\pi\gamma}}2,
\end{equation}
with the same return and transfer probabilities $\cos^2\tfrac{\pi\gamma}2$,
$\sin^2\tfrac{\pi\gamma}2$ as in Proposition~\ref{prop:FR} and the opposite sign in front of
the mirror. Since the site reversal is now $+\R$ on the envelopes, \eqref{eq:FReven} is
\emph{literally} the continuum identity \eqref{eq:FRcont}, whereas for $N$ odd the two were
related by $\Rc=-\R$. Either way the limiting statement is the same, as it should
be: $e^{-\ii\pi\N}=\xi I+\zeta'\R$ is a property of the singular oscillator.

\section{Conclusion}
\label{sec:concl}

The continuum limit of the para--Krawtchouk oscillator --- the single Hamiltonian defined by
the tridiagonal Jacobi matrix --- is the singular (isotonic) oscillator
$-\partial_\eta^2+\eta^2+\tfrac{\gamma^2-1}{4\eta^2}$. We obtained this in two parallel ways:
by contracting the discrete Schr\"odinger equation, and by taking the limit directly on the
polynomials, using their explicit hypergeometric form and the reduction to two Hahn families,
which yields --- for every degree --- the generalized Hermite functions \eqref{eq:genH}
(Proposition~\ref{prop:conv}), the envelope being read off the normalization of the
polynomials, exactly as one passes from Krawtchouk to
Hermite. The parameter $\gamma$ is preserved throughout, carried by a central impurity of the
couplings, decaying as the inverse square of the distance --- the marginal rate ---, which survives as the $1/\eta^2$
term and vanishes at $\gamma=1$; for even $N$ the impurity is shared between the
couplings and the fields, and only their sum survives, the same for both parities
(\S\ref{sec:even}). The site reversal becomes the reflection
$\eta\mapsto-\eta$; it commutes with the limiting Hamiltonian, is absent from its potential
--- so the limit is the scalar singular oscillator, not the Wigner--Dunkl parabosonic
oscillator --- and grades the two sublattice channels, the two self--adjoint extensions at
the singular origin that the discrete model selects, separated in energy by $\gamma$. That separation is precisely the
fractional--revival phase, which becomes the exact operator identity
$e^{-\ii\pi\N}=\xi I+\zeta'\R$ for the limiting Hamiltonian. Algebraically,
the bispectral pair of the discrete model, which generates the quadratic Hahn algebra,
becomes a pair of generators of the $\su$ dynamical symmetry
of the singular oscillator, $\mathbb Z_2$--graded by the mirror, which places the model
beside the Krawtchouk (harmonic) and Hahn (parabosonic) oscillators. The comparison with the $\su$
Meixner oscillator on the semi--infinite lattice (\S\ref{sec:meixner}) shows the same
singular oscillator reached along a road where no contraction is needed --- the algebra
being present at every scale --- and identifies the para--Krawtchouk continuum as two such
half--line models, of complementary Bargmann indices $\tfrac{2\mp\gamma}4$, related by the
mirror that carries the revival. More broadly (\S\ref{sec:doubling}), the para--Krawtchouk
Hamiltonian combines two Hahn families in the manner of a \emph{doubling}, but with a
continuous offset $\gamma$; the
integer--offset doubles of Oste and Van der Jeugt, whose reflection is algebraic and whose
continuum limit is parabosonic, meet the present construction at $\gamma=1$, where the
para--Krawtchouk limit is the ordinary oscillator. The continuum limits of the
two other families are taken up in two companion papers: the para--Racah Hamiltonian on its
quadratic bi--lattice \cite{paraRacah,paraRacahChain} contracts to a P\"oschl--Teller box \cite{PRcont},
and the para--Bannai--Ito Hamiltonian, on the $q=-1$ grid, to the Scarf~I Hamiltonian, the square of a Dunkl supercharge, for an even number of sites, and to the matrix supersymmetric P\"oschl--Teller
pair for an odd number \cite{PBIcont}.

\section*{Acknowledgements}
NC thanks the Centre de Recherches Math\'ematiques (CRM) for its hospitality. LV is funded in
part through a Discovery Grant of the Natural Sciences and Engineering Research Council (NSERC)
of Canada; SZB, QL, LM and MJRV enjoy scholarships and fellowships provided by this fund.

\section*{Conflict of interest}
The authors declare that they have no conflict of interest.

\section*{Data availability statement}
No new data were created or analysed in this study.

\appendix
\renewcommand{\theequation}{A.\arabic{equation}}
\setcounter{equation}{0}
\section{The alternating ansatz for the discrete Schr\"odinger equation}
\label{app:contraction}

Unless stated otherwise, in this appendix, we will distinguish the quantities defined in \cite[Ch. II.1]{GantmacherKreinBook} from ours via an \textit{arc} (e.g., their Jacobi matrix will be denoted as $\wideparen{J}$ and ours simply as $J$). In (1) of \cite[Ch. II.1]{GantmacherKreinBook},
they write:
\begin{equation}
    \label{JacobiGK}
    \wideparen{J} =\begin{bmatrix}
   \wideparen{a}_1 & -\wideparen{b}_1 & 0 & \cdots & 0\\
   -\wideparen{c}_1 & \wideparen{a}_2 & -\wideparen{b}_2  & \cdots & 0\\
   0 & -\wideparen{c}_2 & \wideparen{a}_3  & \cdots & 0\\
   \vdots & \vdots & \vdots  & \ddots & \vdots\\
   0 & 0 & 0 & \cdots & \wideparen{a}_{\wideparen{n}}\\

    \end{bmatrix}.
\end{equation}
We can therefore make the following identifications with our $J$: $B_{k-1}=\wideparen{a}_k$; $\sqrt{U_k}=-\wideparen{b}_k=-\wideparen{c}_k$; and $N=\wideparen{n}-1$. We can recognize that our Jacobi matrix is thus the symmetric case of the more general $\wideparen{J}$. Associated to this $\wideparen{J}$, they define the polynomials $\wideparen{D}_{k}$ as:
\begin{equation}
    \label{eq:polynomialsGK}
    \wideparen{D}_k\noleft(x\noright)= \det\noleft(\wideparen{J}_k-x\noright),
\end{equation}
with $\wideparen{J}_k$ being the $k^{\text{th}}$ leading principal submatrix of $\wideparen{J}$, and these satisfy:
\begin{equation}
    \wideparen{D}_k\noleft(x\noright)= \noleft(\wideparen{a}_k-x\noright)\,\wideparen{D}_{k-1}\noleft(x\noright)-\wideparen{b}_{k-1}\wideparen{c}_{k-1}\wideparen{D}_{k-2}\noleft(x\noright).
\end{equation}
Shifting $k$, re-arranging and multiplying throughout by $(-1)^{k}$, we can re-write this expression in terms of our $B_k$ and $\sqrt{U_k}$ as:
\begin{equation}
    x\,\noleft(-1\noright)^{k}\,\wideparen{D}_{k}\noleft(x\noright)= \noleft(-1\noright)^{k+1}\,\wideparen{D}_{k+1}\noleft(x\noright)+B_{k}\,\noleft(-1\noright)^{k}\,\wideparen{D}_{k}\noleft(x\noright)+U_{k}\,\noleft(-1\noright)^{k-1}\,\wideparen{D}_{k-1}\noleft(x\noright).
\end{equation}
This is in the same form as \eqref{monicPKrecurrence}, from which we can therefore obtain that $\noleft(-1\noright)^{k}\,\wideparen{D}_{k}=P_{k}$, i.e., these polynomials correspond to our monic orthogonal polynomials up to an alternating negative sign. We can hence also translate property $7^\circ.$ of \cite[Ch. II.1]{GantmacherKreinBook} for our case. This can be restated as:
\begin{quote}
\textit{Considering the $\wideparen{n}$ ordered eigenvalues of $\wideparen{J}$, i.e., $\wideparen{\lambda}_1<\dots<\wideparen{\lambda}_{\wideparen{n}}$, then, at a given $\lambda_k$, the sequence of $\wideparen{D}_0\noleft(\lambda_k\noright),\dots,\wideparen{D}_{\wideparen{n}-1}\noleft(\wideparen{\lambda}_k\noright)$ has $k-1$ sign changes.}
\end{quote}
Applying this to our polynomials, we can first identify the eigenvalues $x_{s}=\wideparen{\lambda}_{s+1}$. Correspondingly, the appropriate sequence of $\wideparen{D}_k\noleft(x_s\noright)$ will have $s$ sign changes. Conversely, since $\noleft(-1\noright)^k$ changes sign $N$ times and $\noleft(-1\noright)^{k}\,\wideparen{D}_{k}=P_{k}$, overall the sequence of $P_{k}$ will have $N-s$ sign changes. This directly carries over to the orthonormal $p_{k}$'s, as these are related to the monic $P_{k}$'s via a positive normalization factor.

\end{document}